\newcommand{\handout}[5]{
   \renewcommand{\thepage}{#1-\arabic{page}}
   \noindent
   \begin{center}
   \framebox{
      \vbox{
    \hbox to 5.78in { {\bf CSC 2414 Lattices in Computer Science}
     	 \hfill #2 }
       \vspace{4mm}
       \hbox to 5.78in { {\Large \hfill #5  \hfill} }
       \vspace{2mm}
       \hbox to 5.78in { {\it #3 \hfill #4} }
      }
   }
   \end{center}
   \vspace*{4mm}
}
\newcommand{\comment}[1]{}
\DeclareMathOperator*{\argmax}{arg\,max}
\DeclareMathOperator*{\argmin}{arg\,min}
\newenvironment{proof-sketch}{\noindent{\bf Sketch of Proof}\hspace*{1em}}{\qed\bigskip}
\newenvironment{proof-idea}{\noindent{\bf Proof Idea}\hspace*{1em}}{\qed\bigskip}
\newenvironment{proof-of-corollary}[1]{\noindent{\bf Proof of Lemma #1}\hspace*{1em}}{\qed\bigskip}
\newenvironment{proof-attempt}{\noindent{\bf Proof Attempt}\hspace*{1em}}{\qed\bigskip}
\def\fnum@figure{{\bf Figure \thefigure}}
\def\fnum@table{{\bf Table \thetable}}
\long\def\@mycaption#1[#2]#3{\addcontentsline{\csname
  ext@#1\endcsname}{#1}{\protect\numberline{\csname
  the#1\endcsname}{\ignorespaces #2}}\par
  \begingroup
    \@parboxrestore
    \small
    \@makecaption{\csname fnum@#1\endcsname}{\ignorespaces #3}\par
  \endgroup}
\def\mycaption{\refstepcounter\@captype \@dblarg{\@mycaption\@captype}}
\newcommand{\mathify}[1]{\ifmmode{#1}\else\mbox{$#1$}\fi}
\newcommand{\bigO}O
\newcommand{\remove}[1]{}
\newcommand{\ignore}[1]{}
\def\R{\mathbb{R}}
\newcommand{\alg}{\mathcal{A}}
\newif\if@restonecol 
\newcommand{\reals}{\ensuremath{\mathbb R}} 
\newcommand{\be}{
\begin{equation}
} 
\newcommand{\ee}{
\end{equation}
}
\newcommand{\mb}[1]{\mathbf{#1}}
\renewcommand\bibsection
\begin{document}

\title{Strategyproof Mechanisms for Competitive Influence in Networks}
\author{Allan Borodin
\and Mark Braverman
\and
Brendan Lucier
\and
Joel Oren
}

\institute{Allan Borodin \and Joel Oren \at Department of Computer Science, University of Toronto, 10 King's College Road, Toronto, Ontario, M5S 3G4, Canada. Fax: +1-416-978-1931 \\\email{\{bor,oren\}@cs.toronto.edu}
\and Mark Braverman \at Department of Computer Science, Princeton University, 35 Olden Street, Princeton, NJ 08540, USA. \\\email{mbraverm@cs.princeton.edu}
\and Brendan Lucier \at Microsoft Research, 1 Memorial Drive, Cambridge, MA 02142, USA.\\\email{brlucier@microsoft.com}}

\maketitle

\begin{abstract}
Motivated by applications to word-of-mouth advertising, we consider a game-theoretic scenario in which competing advertisers want to target initial adopters in a social network.
Each advertiser wishes to maximize 
the resulting cascade of influence, modeled by a general network diffusion process.
However, competition between products may adversely impact the rate of adoption for any given firm.
The resulting framework gives rise to complex preferences that depend on the specifics of the stochastic diffusion model and the network topology.

We study this model from the perspective of a central mechanism, such as 
a social networking platform, that can optimize seed placement as a service for the advertisers.  We ask: given the reported demands of the competing firms, how should a mechanism choose seeds to maximize overall efficiency?
Beyond the algorithmic problem, 
competition raises issues of strategic behaviour: rational agents should not be incentivized to underreport their budget demands.

We show that when there are two players, the social welfare can be $2$-approximated by a polynomial-time strategyproof mechanism.  Our mechanism is defined recursively, randomizing the order in which advertisers are allocated seeds according to a particular greedy method.  For three or more players, we demonstrate that under additional assumptions (satisfied by many existing models of influence spread) there exists a simpler strategyproof $\frac{e}{e-1}$-approximation mechanism; notably, this second mechanism is not necessarily strategyproof when there are only two players. 
\end{abstract}
\smallskip

\noindent \textbf{Keywords.} Game Theory, social networks, mechanism design, influence diffusion
\section{Introduction}
The concept of word-of-mouth advertising is built upon the idea that referrals between individuals can lead to a contagion of opinion in a population.  In this way, a small number of initial adopters can generate a cascade of influence, significantly impacting the adoption of a new product.  While this concept has been very well studied in the marketing and sociology literature \cite{G78,S78,BR87,GLM01,CM07}, 
recent popularity of online social networking has made it possible to obtain rich data and directly target individuals based on network topology.
Indeed, a potential advantage of advertising served via online social networks is that the platform could preferentially target central individuals, impacting the overall effectiveness of its advertisers' campaigns.

Various models of network influence spread have arisen recently in the literature, with a focus on the algorithmic problem of deciding which individuals to target as initial adopters (or ``seeds")
\cite{Kempe:2003,Kempe:2005,DBLP:journals/siamcomp/MosselR10}.  One commonality among many of these 
(stochastic) models is that the expected number of eventual adopters is a non-decreasing submodular function of the seed set.  This implies that natural greedy methods \cite{Nemhauser1978} can be used to choose initial adopters to approximately maximize an advertiser's expected influence.  Of course, actually applying such algorithms requires intimate knowledge of the social network, which may not be readily available to all advertisers.  However, the owners of the network data (e.g.\ Facebook or Google) could more easily find potentially influential individuals to target.  Our goal is to study the problem faced by a network platform who wishes to provide this service to its advertisers.

Consider the following framework.  An online social network platform sells advertising space by contract, offering a price per impression to advertising firms.  Each firm has an advertising budget, which determines a number of ad impressions they wish to display.  As an additional service to the firms, the platform attempts to optimize the placement of advertisements so to maximize influence diffusion.  This optimization is to be provided as a service to the advertisers, with the primary goal of making the social network more attractive as a marketing platform.  The network provider thus faces an algorithmic problem: maximize the total influence of the advertisers given their demands (i.e.\ number of impressions).  This problem may be complicated by competition between advertisers, which results in negative externalities upon each others' product adoptions.  Moreover, since advertising budgets are private, there is also a game-theoretic component to the problem: the placement algorithm should not incentivize firms to reduce their budgets.  This may happen if, due to eccentricities of the algorithm, lower-budget advertisers might obtain higher expected influence than advertisers with higher budgets.

Crucial to this problem formulation is the way in which influence is modeled by the advertising platform.  We present a general submodular assignment problem with negative externalities, which captures most previous influence models that have been proposed in the literature \cite{Carnes:2007:MIC:1282100.1282167,DBLP:conf/wine/2007/BharathiKS07,springerlink:10.1007/978-3-642-17572-5_48,GoyalK12}.  Within this framework, we consider the optimization problem faced by a central mechanism that must determine the seed nodes for \emph{each} advertiser, given the 
advertisers' budget constraints. The goal of the mechanism is to maximize the overall efficiency of the marketing campaigns, but the advertisers are strategic and may underreport their budget demands to increase their own product adoption rates.


Two points of clarification are in order.  First, our formulation differs from a line of prior work that studies equilibria of the game in which each advertiser selects their seed set directly \cite{GoyalK12,DBLP:conf/wine/2007/BharathiKS07}.  Such a game supposes that each advertiser has detailed knowledge of the social network topology, the ability to compute or converge to equilibrium strategies, and the power to target arbitrary individuals in the network.  Our work differs in that we assume that the targeted advertising goes through an intermediary (the social network), which selects seed sets on the players' behalf.  

Second, we suppose that advertiser budgets and the price per impression are set exogenously (or, alternatively, that the seeds correspond to special offers or other interventions of limited quantity).  As such, we do not explicitly model the problem of maximizing revenue; rather, the role of our mechanism is to decide where to place the purchased impressions.  In this sense our framework is closer in spirit to matching algorithms for display advertising \cite{FMMM09,FHKMS10} than to revenue-optimal mechanism design.
There are many ways in which this model could be enriched, such as by endogenizing budgets or allowing complex pricing schemes that depend upon expected influence.  We leave these as avenues for future work, though we note that such extensions presuppose that agents have sufficient knowledge of the spread process and graph topology to accurately value initial adoption sets.


Our model of competitive influence spread is described formally in Section \ref{sec.preliminaries}. 
Our formulation captures and extends many existing models of influence spread, allowing incorporation of features such as node weights, player-specific spread probabilities, and non-linear selection probabilities.  A more detailed discussion appears in Appendix \ref{sec.models}.


We wish to design mechanisms that are strategyproof, in that rational agents are incentivized to truthfully reveal their demands.  In particular, an agent should not be able to increase its expected influence by reducing its requested number of seeds (i.e.\ budget).
The difficulty in avoiding such non-monotonicities is that the expected outcome of an advertiser can be negatively impacted by externalities imposed by the allocation to its opponents, which can depend on the budget declarations in a non-trivial manner.
\bigskip

\noindent
\textbf{Our Results: } 
We design three different strategyproof mechanisms for the competitive influence maximization problem, for use in varying circumstances.  Our main result is a $2$-approximate strategyproof mechanism for use when there are two competing advertisers, under a very general model of influence spread.  
This mechanism uses a novel technique for monotonizing the expected utilities of the agents using geometric properties of the problem in the two-player case.  

Our construction is based upon a greedy algorithm for submodular function maximization subject to a partition matroid constraint, known as the locally greedy algorithm \cite{Nemhauser1978,Goundan2007}.   This algorithm repeatedly chooses an agent in each round, and assigns a node to that agent in order to maximize the marginal increase to social welfare.  As we discuss in Section \ref{sec.examples}, this algorithm is not strategyproof in general.  However, it has the property that the choice of agent in each round is arbitrary; this provides a degree of freedom that can be exploited to obtain strategyproofness.  Indeed, for the case of two agents, we show how to recursively construct a distribution over potential allocations returned by locally greedy algorithms, with the property that each agent's expected individual value under this distribution is monotone\footnote{We use the word \emph{monotone} in its game-theoretic sense, meaning that a player's outcome is a monotone function of its bid. We distinguish this from the monotonicity of the social welfare function of the mechanism, and use the term \emph{non-decreasing} when referring to the social welfare function.} with respect to the number of initial elements allocated.

Our second mechanism is for three or more players, under some natural restrictions on the influence spread process.  Specifically, we require two properties:  first, the social welfare is independent of the manner in which elements are partitioned among the players (\emph{mechanism indifference}).  Second, the payoff of a player does not depend on the manner in which the elements allocated to her competitors are partitioned among the competitors (\emph{agent indifference}).  These conditions are defined formally in Section \ref{sec:3player}.  
We note that these assumptions are implicit in many prior models of influence spread \cite{Carnes:2007:MIC:1282100.1282167,DBLP:conf/wine/2007/BharathiKS07}.  Under these assumptions, we develop a strategyproof mechanism that obtains a $\frac{e}{e-1}$-approximation to the optimal social welfare when there are three or more players. 
Interestingly, our analysis makes crucial use of the presence of three or more players, and indeed we show that this mechanism fails to be strategyproof when only two players are present, even with these two additional assumptions\footnote{Notice that the agent indifference property 
holds vacuously in the two-player case, as there is only one other player.}.

Our final mechanism construction 
satisfies an additional constraint that agent allocations be disjoint.  In its most general form, our problem specification does not require that the set of elements allocated to the agents be disjoint\footnote{Many prior models of competitive influence do allow non-disjoint allocations \cite{GoyalK12,DBLP:conf/wine/2007/BharathiKS07}; our intention is to demonstrate that a disjointness condition can be accommodated if necessary, rather than imply that non-disjointness is undesirable.}. Our first mechanism described above may place a given node in the seed sets of multiple players. Our second mechanism for more than two players 
produces a disjoint allocation
when the greedy algorithm used for a single 
player results in a disjoint allocation. 
When it is desirable for allocations to be disjoint, we show how our construction can be modified to work under this additional requirement, resulting in a strategyproof $3$-approximation mechanism.  This result requires that we impose a symmetry assumption on the influence spread model, which states that the outcome of the influence process is invariant under relabelling of the players\footnote{We note that this property holds for most models of influence spread studied in the literature \cite{GoyalK12,DBLP:conf/wine/2007/BharathiKS07,Carnes:2007:MIC:1282100.1282167,KM2011}.}.  
Our mechanisms run in time polynomial in the demands submitted by the agents and in the size of the underlying ground set.  This dependence on the demand values is necessary, as the mechanism constructs a solution consisting of sets of this size.  Our dependence on the size of the underlying ground set is captured by queries for an element that maximizes a marginal increase in social welfare.  Given oracle access to queries of this nature, our algorithm would run in time polynomial in the declared demands.
Generally speaking, the spread process itself is randomized and as in \cite{Kempe:2003,Kempe:2005}, the oracle can be viewed as providing an element that approximately maximizes
the marginal gain 
by sampling enough trials of this process \cite{Kempe:2003,Kempe:2005}.  
Our analysis also holds when such approximate marginal maximizers are used to implement our underlying greedy algorithm; following the exposition in \cite{Goundan2007}, such an approximate maximizer provides an approximation
that approaches 2 as the oracle approximation approaches 1. We will simplify our discussion throughout
by assuming it is possible to find elements that exactly maximize marginal gains in social welfare.  

\bigskip

\noindent
\textbf{Related Work:}
%
Models of influence spread in networks, covering both cascade and threshold phenomena, are well-studied in the sociology and marketing literature \cite{G78,S78}.
The (non-competitive) problem of maximizing influence in social networks was 
theoretically modelled by Kempe et al. \cite{Kempe:2003,Kempe:2005}. Subsequent papers extended these models to a competitive setting in which there are multiple advertisers. Carnes et al. \cite{Carnes:2007:MIC:1282100.1282167} suggested the Wave Propagation model and the Distance Based model, which were based on the Independent Cascade model. Additionally, Dubey et al. \cite{DBLP:conf/wine/DubeyGM06}, Bharathi et al. \cite{DBLP:conf/wine/2007/BharathiKS07}, Kosta et al. \cite{KostkaOW08}, and Apt et al. \cite{KM2011} also studied various competitive models. The main issue that these models addressed was how to arbitrate ties in each step of the process,
determining which technology a node will assume when reached by several technologies at once. The main algorithmic task addressed by these models is choosing the optimal set of nodes for a player entering an existing market, in which the competitor's choice of initial nodes is already known. Borodin et al. \cite{springerlink:10.1007/978-3-642-17572-5_48} presented the OR model which proposes a different approach, in which the previously studied, non-competitive diffusion models proceed independently for each technology as a first 
phase of the process, after which the nodes decide between each technology according to some decision function. 

Independent of our work \cite{BBLO12}, Goyal and Kearns \cite{GoyalK12} provided bounds on the efficiency of equilibria (i.e. the price of anarchy) in a competitive influence game played by two players.  Their influence spread model is characterized by switching functions (specifying the process by which a node decides to adopt a product) and selection functions (specifying the manner in which nodes decide which product to adopt).
They demonstrate that an equilibrium of the resulting game yields half of the optimal social welfare, given that the switching functions are concave.
Their model is closely related to our own.  
Specifically, the social welfare function is monotone and satisfies the mechanism indifference assumption, and concavity of the switching function implies that the social welfare is submodular (by \cite{DBLP:journals/siamcomp/MosselR10}), so our mechanism for two players applies to their model as well\footnote{An ``adverse competition'' assumption in \cite{GoyalK12} is stated 
for $k = 2$ agents and holds at every node. Their assumption is somewhat 
weaker than ours, 
which we only apply to
the social welfare function. See section ~\ref{sec.preliminaries}.}.  Goyal and Kearns also note that their results extend to $k > 2$ players, resulting in an approximation factor of $2k$, when the selection function is linear; this linearity implies our agent indifference assumption, and hence our mechanism for three or more players also applies.
However, we note that the
Goyal and Kearns results on efficiency at equilibrium are
satisfied without an intervening mechanism and hence are incomparable with the  mechanism results of this paper. 

Finally, to the best of our knowledge, there is only one other paper that
considers a mechanism design problem in the context of competitive
influence spread. Namely, Singer \cite{Singer12} considers a social
network where the nodes are viewed as agents who have private costs
for hosting a product and the mechanism has a budget for inducing some set of
initial nodes to become hosts. The mechanism wishes to maximize the number
of nodes that will eventually be influenced and each agent wishes
to maximize their profit equal to the inducement received minus its private
cost.



\comment{
We are interested in giving truthful (in expectation) mechanisms for covering-related processes. Specifically, we are interested in problems that are competitive extensions of known covering problems. One extension for extending a covering problem, is using the OR technique Yuval Filmus and I suggested in our WINE paper. Following is a simplified definition of the setting for the covering problem, with a simple case of an OR extension.
A covering problem usually involves some ground set $U=\{e_1,\ldots,e_n\}$, and a set of ``covering'' elements: $C = \{c_1,\ldots,c_m\}$. In the class of covering problems we consider, there is a prescribed number $1 \leq k \leq m$, and a cover function $p_{e_i}:2^C \rightarrow [0,1]$ associated with each element in the ground set. Given a set of covering elements $I \subseteq C$, $p_{e_i}(I)$ denotes the \textit{probability} that element $e_i$ is covered by $I$. The algorithmic problem, is to pick a set $I \subseteq C$, subject to $|I| \leq k$ which maximizes $\sum_{e_i}p_{e_i}(I)$. Naturally, we require the cover functions to be \textit{non-decreasing and submodular}.

We gave two natural examples for such problems:
\begin{enumerate}
\item \textbf{Diffusion processes} (given in the 2003 Kempe et al. paper): In this case, we're given an edge-weighted graph (weights are restricted according to the influence model used). The ground set $U$ is the set of vertices $U$, the covering elements are again, the graph vertices. The cover function for a node is simply the probability of being activated during the process, given the set of initially infected nodes. We can approximate the activation probabilities for each node $u \in U$ by simulating the process using a polynomial number of samples.
\item \textbf{The Set covering problem}: The ground set is the same from the standard definition of the problem.  The covering elements in this case, would be just the set of indices $C = \{1,\ldots, m\}$. Here, we have some freedom with the covering functions (they only need to be non-decreasing and submodular). Notice that the standard deterministic set cover problem is a special case of this problem. Here, we have a set of subsets: $S_1,\ldots,S_m \subseteq U$, and each index $i$ in $C$ refers to a subset $S_i$. The covering function would just be: $p_{e_i}(I) = 1$ if $\exists i \in I$ such that $e_i \in S_i$, and 0 otherwise. 
\end{enumerate}
One can verify that in both of these cases, the expected outcome, the set of covered elements is a submodular function (we can also use an arbitrary submodular set function instead of the cardinality function).

For a given covering process, we give the following \textit{fair} OR extension for a competitive setting. Given two players $A,B$ with respective sets of covering elements $I_A,I_B \subseteq C$, a ground set item can be covered by at most one player in the following manner. Each item determines whether it is ``reached'' by each player, \textit{independently}, with probabilities: $p_{e_i}(I_A), p_{e_i}(I_B)$. Notice that we use the same cover function (we can extend this model by considering different functions for each player). If element $e_i$ is reached by only player $A$ ($B$), it'll be deterministically covered by her. However, if $e_i$ is reached by both players, it will flip a \textit{fair} coin to determine which player covers it. Notice that this model can be easily extended to an arbitrary number of players. Although the OR extension seems somewhat contrived for the influence processes, I think we can give a realistic motivation for the OR extension of the set cover problem. E.g. an online ad problem in which each agent wishes to cover as many users. A user is covered (buys some advertised product) if it clicks on at least one ad. The cover set in this case, will be the sites in which an agent can post ads. A user will click on an ad depending on the set of websites that contain it.

Notice that the social welfare function, the combined number of covered nodes by all players, is a non-decreasing submodular set function (this requires a short proof however).

We are interested in incentive compatible mechanisms that maximize the social welfare.

\textbf{Observation1:} The social welfare depends on the actual values of $I_A$ and $I_B$, i.e. it does not suffice to give $I_A \cup I_B$ to estimate the social welfare. Thus, we cannot break the problem in two: first find the set of elements to be allocated to both players, then come up with an incentive compatible way of allocating these elements among the players. Notice however, that this property is guaranteed in the Wave Propagation influence model given by Carnes et al. (2007, and 2009): It is stated almost exactly like the standard independent cascade process, but with two initial sets: $I_A,I_B$. An inactive node $u$ that is reached by two players simultaneously at step $t$, will adopt influence $A$ at step $t+1$ with probability $\frac{|N_A^t(u)|}{|N_t(u)|}$, where $N_A^t(u)$ is the set of active $A$ neighbours of $u$ at step $t$, and $N_t(u)$ is the total set of active neighbours of $u$ at step $t$ (will adopt influence $B$ with the complementary probability).

\textbf{Observation2:} Notice that maximizing the social welfare problem given the bidding profile: $(a,b)$ can be restated as a problem of maximizing submodular set-function subject to a partition matroid constraint. As such, a failed first attempt would be to use the locally greedy algorithm that was suggested by Nemhauser et al. (1978) and recently restated and extended by Goundan and Schultz to handle cases where the function can only be approximated within a factor of $\alpha$. This algorithm gives a $(1+\alpha)$-approximation. The algorithm would simply involve iterating through the players, and greedily assigning items with the greatest marginal gain, based on the partial solution.

\textbf{(Brendan's edits begin)}

Although the straightforward algorithm gives a constant approximation, it is not incentive compatible: an increase in a player's budget can lead to a loss in expected social welfare.  To address this issue, we can take advantage of the fact that the proof of this algorithm does not rely on the ordering by which we allocate nodes at each step (excerpt from the Goundan paper: ``It is important to note that, the order in which the locally greedy algorithm deals with elements of different types is completely arbitrary --- p.10 in the paper).  Our approach to making our algorithm IC, then, is to choose the order of allocations to the two players in a careful way to obtain monotonicity.
} 


 \section{Preliminaries}
\label{sec.preliminaries}
We consider a setting in which there is a ground set $U = \{e_1, \dotsc, e_n\}$ of $n$ elements (e.g.\ nodes in a social network), and $k$ players.
An allocation is some $(S_1,\ldots, S_k) \in 2^U \times \cdots \times 2^U$; that is, an assignment of set\footnote{For notational convenience we will assume that $S_1,\ldots, S_k$ are sets, but our results extend to permit multisets (i.e., where the same element can be awarded multiple times to one agent).
See Appendix~\ref{sec.models} for further discussion.} 
$S_i$ to each player $i$.  
For the most part we will follow the convention that these sets should be disjoint, though in general our model does not require disjointness.  In particular, we consider a setting in which sets need not be disjoint in Section \ref{sec.main}.

We are given functions $f_i \colon 2^U \times \cdots 2^U \to \reals_{\geq 0}$, denoting the expected values of players $i=1,\ldots,k$, for allocation $(S_1,\ldots,S_k)$. We define 
$f = \sum_{i=1}^k f_i$, so that $f(\mathbf{S})=f(S_i,\mathbf{S_{-i}})$ denotes the total expected welfare of the allocation $(\mathbf{S})=(S_1,\ldots,S_k)=(S_i, \mathbf{S_{-i}})$.

We will require that functions $f$, and $f_1,\ldots, f_k$ satisfy certain properties, motivated by known properties of influence spread models studied in the literature.  First, we will assume that $f$ is a submodular non-decreasing function, in the following sense.  For any $S_i \subseteq S'_i$, $\mathbf{S_{-i}}$, and $e \in U$, we have
$f(S_i,\mathbf{S_{-i}}) \leq f(S'_i,\mathbf{S_{-i}})$ and
$$f(S_i \cup \{e\},\mathbf{S_{-i}}) - f(S_i,\mathbf{S_{-i}}) \geq f(S'_i \cup \{e\}, \mathbf{S_{-i}}) - f(S'_i,\mathbf{S_{-i}}).$$
We will also require that for all $i=1,\ldots,k$, the function $f_i$ be non-decreasing in the allocation to player $i$, so that $f_i(S_i,\mathbf{S_{-i}}) \leq f_i(S'_i,\mathbf{S_{-i}})$ for any $S_i \subseteq S'_i$.

We impose one final model assumption,
which we call \emph{adverse competition}: that each $f_i$ is non-increasing in the allocation to other players.  That is, for all $j \neq i$, $f_i(S_j,\mathbf{S_{-j}}) \geq f_i(S'_j,\mathbf{S_{-j}})$ for any $S_j \subseteq S'_j$. This assumption captures our intuition that, in a competitive influence model, the presence of additional adopters for one player can only impede the spread of influence for another player.
We discuss the motivation for and necessity of this assumption in Appendix \ref{sec.models}. 

We study the following algorithmic problem.  Given input values $b_1,\ldots,b_k \geq 0$, we wish to find sets $S_1,\ldots, S_k \subseteq U$, with $|S_i| = b_i$, for all $i=1,\ldots,k$, such that $f(S_1,\ldots,S_k)$ is maximized.  We assume we are given oracle access to the functions $f$ and $f_1, \ldots, f_k$.  
Note that we impose a ``demand satisfaction'' condition on the mechanism, that each agent is allocated all of his demand. (To this end we will assume that $|U| \geq \sum_{i=1}^kb_i$; i.e.\ that there are enough items to allocate).

Suppose that $\alg$ is a deterministic algorithm for the above problem, so that $\alg(b_1,\ldots,b_k)$ denotes an allocation for any $b_1,\ldots,b_k \geq 0$.  We say that $\alg$ is \emph{monotone} if, for all bid vectors $\mathbf{b}=(b_1,\ldots,b_k) \in \mathbb{Z}^k_{\geq 0}$, $f_i(\alg(b_i,\mathbf{b_{-i}})) \leq f_i(\alg(b_i+1,\mathbf{b_{-i}}))$, for each player $i=1,\ldots,k$.  We extend this definition to randomized algorithms in the natural way, by taking expectations over the outcomes returned by $\alg$.

We will assume that each player $i$ has a \emph{type} $\tilde{b_i}$, representing the maximum number of elements they can be allocated.  
The utility of player $i$ for allocation $\mathbf{S}=(S_1,\ldots,S_k)$ is 
\[ u_i(\mathbf{S}) = \begin{cases}f_i(S_i,\mathbf{S_{-i}}) \quad & \text{if $|S_i| \leq \tilde{b_i}$}\\ -\infty & \text{otherwise.}\end{cases} \]
We then say that algorithm $\alg$ is \emph{strategyproof} if, for all $\mathbf{b} \in \mathbb{Z}^k_{\geq 0}$ and $b_i' \leq b_i$, $u_i(\alg(b_i',\mathbf{b_{-i}})) \leq u_i(\alg(b_i, \mathbf{b_{-i}}))$.  In other words, an algorithm is strategyproof if it incentivizes each agent to report its type truthfully.  

The problem of maximizing welfare function $f(\cdot)$ subject to the reported demands can be stated in the framework of maximizing a submodular set-function subject to a \emph{partition matroid} constraint. An instance of a partition matroid $\mathcal{M}=(E,\mathcal{F})$ is given by a union of disjoint sets $E=\bigcup_{i=1,\ldots,k}E_i$, and a set of corresponding cardinality constraints $d_1,\ldots,d_k$. A set $X$ is in $\mathcal{F}$, i.e. is \emph{independent}, if $|X \cap E_i| \leq d_i$, for all $1 \leq i \leq k$. That is, an independent set is formed by taking no more than the prescribed size constraint for each of the sets. The optimization problem to find an independent set that maximizes a non-decreasing and submodular set-function $g : \mathcal{F} \rightarrow \reals_{\geq 0}$. Our problem falls into this framework by setting the ground set to be $U \times \{1,\ldots,k\}$, the cardinality constraints $d_i=b_i$, for all $i$ and setting the objective function to be the social welfare:
\begin{equation}
g(X)=f(\mathbf{S}), \text{ where }X=\bigcup_{i=1}^k(S_i \times \{i\}).
\end{equation}
We note, however, that this formulation does not apply if the allocated sets are required to be disjoint. The addition of disjointness causes our constraint to no longer take the form of a matroid, an issue which will be addressed in Section ~\ref{sec.disjoint}. Also note that this alternative definition of our setting conforms to the single-parameter convention of submodular set-functions. However, we will mostly refer to the former formulation of the problem for clarity and succinctness.

As a result of this correspondence with the framework of partition matroids, we will be interested in a particular greedy algorithm for this algorithmic problem, known as a \emph{locally greedy} algorithm, studied in \cite{Company1978}, which was subsequently extended in \cite{Goundan2007}.  
The algorithm proceeds by fixing some arbitrary permutation of the multiset composed of $b_i$ $i$'s for each player $i$.  It then iteratively builds the allocation $\mathbf{S}$ where, on iteration $j$, it chooses $u \in \argmax_c \{ f(S_i \cup\{c\},\mathbf{S_{-i}}) - f(S_i,\mathbf{S_{-i}}) \}$ and adds $u$ to $S_i$, where $i$ is the $j$th element of the permutation. Regardless of the permutation selected, this algorithm is guaranteed to obtain a $2$-approximation to the optimal allocation subject to the given cardinality constraints \cite{Company1978,Goundan2007}.

\section{Counter examples when there are two agents}
\label{sec.examples}
The locally greedy algorithm \cite{Nemhauser1978} (see also \cite{Goundan2007})
is defined over an 
\emph{arbitrary} permutation of the agents allocation turns. 
In Section \ref{sec.main} we carefully construct such orderings in a manner
that induces strategyproofness for two players.
To motivate these algorithmic gymnastics, we now demonstrate that more natural
orderings fail to result in strategyproofness.

We begin by considering the
``dictatorship'' ordering, in which one player is first allocated
nodes up to his budget, and only then is the other player allocated nodes.
We will refer to the agents as
$A$ and $B$, and their utilities as $f_A$ and $f_B$ respectively; suppose that $A$ is the dictator. 
For the purposes of our example we will describe $f_A$ and $f_B$ in terms
of the following concrete (but simple) competitive influence spread process\footnote{This process is a simplification of the OR model \cite{springerlink:10.1007/978-3-642-17572-5_48}.} on an undirected network
$G = (V,E)$.

Suppose that each agent is given an initial seed set, say $S_A$ and $S_B$.
For agent $A$, each node in $S_A$ is given
a single chance to \emph{activate} each of its neighbors independently, which it does with probability $p=0.9$.  (Note that this activation process is not recursive; it affects
only the neighbors of $S_A$).
We then, independently, allow each node in seed set $S_B$ to attempt to activate each of
its neighbors,
resulting in a set of nodes activated by $B$.  To determine the final influence sets,
any node activated only by $A$ is influenced by $A$, any activated only by $B$ is influenced by $B$,
and any node activated by both will choose between the two agents uniformly at random.  The value
of $f_A(S_A, S_B)$ is the expected number of nodes influenced by $A$ at the end of this process,
and similarly for $f_B$.
One can easily show that an agent's influence is non-decreasing in its seed set, that
the sum of influences is submodular non-decreasing, and that the functions satisfy adverse competition.

Our network is as follows.  The graph consists of two components; one is the complete bipartite graph $K_{2,10}$, and the other is the star $K_{1,4}$.  Let $w_1$ and $w_2$ be the two nodes of degree $10$, and let $v$ be the center of the star.  We claim that the locally greedy algorithm paired with the dictatorship ordering is not strategyproof for this network.  Suppose each agent declares a budget of $1$; in this case, the algorithm will allocate $w_1$ to agent $A$, then it will allocate $v$ to agent $B$ (since $4p > 10(1-(1-p)^2)-10p$, which means that $v$ maximizes the marginal gain in \emph{social welfare}). This results in an expected influence of $10p = 9$ for agent $A$.  In the case where $A$ has a budget of $2$ (and $B$'s budget is still $1$), the greedy algorithm will allocate $w_1$ and $v$ to agent $A$ (for the same reason as before), and will give $w_2$ to agent $B$. In this case, the influence of agent $A$ becomes $4p + 10(p \cdot (1-p) + \frac{p^2}{2}) = 8.55 < 9$, so in particular his influence is not non-decreasing in his declaration.

The above construction can be modified to show that various other
 orderings for the locally greedy algorithm fail to result in strategyproof mechanisms.  
Appendix~\ref{sec.examples.appendix} provides the following examples:
\begin{enumerate}
\itemsep0em
\item The Round Robin ordering: the mechanism
alternates between the players when allocating a node. 
\item Always choosing the player having the smallest current unsatisfied
budget breaking ties in favor of player A. 
\item Taking a uniformly random choice over all orderings with the
required number of allocations to A and B.
\end{enumerate}
The last example is particularly relevant, since in Section~\ref{sec:3player}
we show that for the case of $k>2$ agents, in a setting
that assumes two additional restrictions called MeI and AgI (which
will be defined in Section~\ref{sec:3player}), taking a uniformly random permutation over the
allocation turns is a strategyproof algorithm and results in an
$\frac{e}{e-1}$ approximation to the optimal social welfare. 
In contrast, for the
case of $k=2$, and even with these additional restrictions (one can
verify that the influence model described above, used for our counterexample,
does satisfy both MeI and AgI, although the AgI condition is vacuous), 
the uniformly random mechanism is not strategyproof. \section{A Strategyproof Mechanism for Two Players}
\label{sec.main}

In this section we describe our mechanism for allocating nodes when there are two agents.  The case of $k > 2$ agents is handled in Section \ref{sec:3player}, under additional assumptions that are not necessary for the case $k=2$.
Our mechanism is based on the local greedy algorithm described in Section ~\ref{sec.preliminaries}.
We will focus on cases in which the allocations to the two agents need not be disjoint;
in Section~\ref{sec.disjoint} we extend our result to handle disjointness constraints when agents are ``anonymous.''

A nice property of the local greedy algorithm is that 
its worst-case approximation factor of $2$
holds even if we arbitrarily fix the order in which allocations are made to players $A$ and $B$.
This grants a degree of freedom that we will use to satisfy strategyproofness.
Given a particular pair of budgets $(a,b)$, we will randomize over possible orderings in which to allocate to the two agents, and then apply the greedy algorithm to whichever permutation we choose.  The key to the algorithm will be the manner in which we choose the distribution to randomize over, which will depend on the declared budgets and the influence functions $f_i$.  
As it turns out,
some of the more immediate ways of selecting an ordering 
lead to non-strategyproof mechanisms.  See Appendix~\ref{sec.examples} for a survey of na\"{i}ve orderings.  
Indeed, it is not even clear a priori that distributions exist that simultaneously monotonize the expected allocation for both players.  Our main technical contribution is a proof that such distributions do exist, and moreover can be explicitly constructed in polynomial time.

The idea behind our construction, at a high level, is as follows.  We will construct the distribution for use with budgets $(a,b)$ recursively.  Writing $t = a+b$, we first generate distributions for the case $t=1$ (which are trivial), followed by $t=2$, etc.  To construct the distribution for demands $(a,b)$, we consider the following thought experiment.  We will choose an ordering in one of two ways.  Either we choose a permutation according to the distribution for budget pair $(a-1,b)$ and then append a final allocation to $A$, or else choose a permutation according to the distribution for budget pair $(a,b-1)$ and append an allocation to player $B$.  If we choose the former option with some probability $\alpha$, and the latter with probability $1-\alpha$, this defines a probability distribution for budget pair $(a,b)$.  

What we will show is that, assuming our distributions are constructed to adhere to certain invariants, we can choose this $\alpha$ such that the resulting randomized algorithm (i.e.\ the greedy algorithm applied to permutations drawn from the constructed distributions) will be monotone.  That is, the expected influence of player $A$ under the distribution for $(a,b)$ is at least that of the distribution for $(a-1,b)$, and similarly for player $B$.  The existence of such an $\alpha$ is not guaranteed in general; we will need to prove that our constructed distributions satisfy an additional ``cross-monotonicity'' property in order to guarantee that such an $\alpha$ exists.

One problem with the above technique is that it does not bound the size of the support of the distributions.
In general there will be exponentially many possible permutations to randomize over, leading to exponential computational complexity to compute each $\alpha$.  One might attempt to overcome such issues by sampling to estimate the required probabilities,
but this introduces the possibility of non-monotonicities due to sampling error, which we would like to avoid.  We demonstrate that each distribution we construct can be ``pruned'' so that its support contains at most three permutations, while still retaining its monotonicity properties.  In this way, we guarantee that our recursive process requires only polynomially many queries (to the influence functions) in order to choose a permutation.

\subsection{The Allocation Algorithm}
Our algorithm will proceed by choosing a distribution over orders in which nodes are allocated to the two players.  This will be stored in a matrix $M$, where $M[a,b]$ contains a distribution over sequences $(y_1, \dotsc, y_t) \in \{A,B\}^{a+b}$, containing $a$ `A's and $b$ `B's.  We then choose a sequence from distribution $M[a,b]$ and greedily construct a final allocation with respect to that ordering.  We begin by describing the manner in which the allocation is made, given the distribution over orderings.  The algorithm is given as Algorithm \ref{alg:mech2}.  
\begin{algorithm}[ht]
\KwIn{Ground set $U= \{e_1,\ldots,\ldots,e_n\}$, budgets $a,b$ for players $A$ and $B$, respectively}
\KwOut{An allocation $I_A,I_B \subseteq U$ for the two players}
\BlankLine
\tcc{Build permutation table.}
$M \leftarrow ConstructDistributions(a,b)$ \;
\tcc{$M[a,b]$ will be a distribution over sequences $(y_1, \dotsc, y_{a+b}) \in \{A,B\}^{a+b}$}
Choose $(y_1, \dotsc, y_{a+b})$ from distribution $M[a,b]$\;
\For{$i = 1 \dotsc a+b$}{
  \eIf {$y_i = 'A'$}{ 
    $u \leftarrow argmax_{c \in U} \{ f(I_A \cup \{c\}, I_B) - f(I_A, I_B)\}$ \;
    $I_A \leftarrow I_A \cup \{u\}$ \;
  } {
    $u \leftarrow argmax_{c \in U} \{ f(I_A, I_B \cup \{c\}) - f(I_A, I_B)\}$ \;
    $I_B \leftarrow I_B \cup \{u\}$ \;
  }
}
\caption{Allocation Mechanism}
\label{alg:mech2}
\end{algorithm} 
An important property of the allocation algorithm that we will require for our analysis is that, given a sequence drawn from distribution $M[a,b]$, the allocation is chosen myopically.  That is, items are chosen for the players in the order dictated by the given sequence, independent of subsequent allocations.  We will use this property to construct the distribution $M[a,b]$, which will be tailored to the specific algorithm to ensure strategyproofness.  We note that this technique could be applied to \emph{any} allocation algorithm with this property; we will make use of this observation in Section \ref{sec.disjoint}.

Recall that the approximation guarantee for the greedy allocation does not depend on the order of assignment implemented in lines 3-11, so that the allocation returned by the algorithm will be a $2$-approximation to the optimal total influence regardless of the permutation chosen on line 2.  It remains only to demonstrate that we can construct our distributions in such a way that the expected payoff to each player is monotone increasing in his bid.
\subsection{Constructing matrix $M$}
We describe the procedure $ConstructDistributions$, used in 
Algorithm \ref{alg:mech2}, to generate distributions over orderings of assignments to players $A$ and $B$.  
We will build table $M[\cdot,\cdot]$ recursively, where $M[a,b]$ describes the distribution corresponding to 
budgets $a$ and $b$.  Our procedure will terminate when the required entry has been constructed.

We think of $M[a,b]$ as a distribution over sequences of the form $(y_1, \dotsc, y_{a+b})$, where 
$y_i \in \{A,B\}$.  For any given sequence, the corresponding allocation is determined
since the greedy algorithm applied in Algorithm \ref{alg:mech2} is deterministic.  We can therefore also think of
$M[a,b]$ as a distribution over allocations, and in what follows we will refer to ``allocations drawn from $M[a,b]$'' without
further comment.

Note that $M[0,b]$ must assign probability $1$ to the sequence $(B, B, \dotsc, B)$,
and similarly $M[a,0]$ assigns probability $1$ to $(A, A, \dotsc, A)$.  We will construct the remaining entries 
of the table $M[a,b]$ in increasing order of $a+b$.

Before describing the recursive procedure for filling the table, we provide some notation.
Given $M$, we will write $w^A(a,b)$ for the expected value of agent $A$ under the distribution of
allocations returned by $M[a,b]$.  Similarly, $w^B(a,b)$ will be the expected value of agent $B$, and
$w(a,b) = w^A(a,b) + w^B(a,b)$ is the expected total welfare.  For notational convenience, 
set $w^A(a,b) = w^B(a,b) = 0$ if $a < 0$ or $b < 0$.

We will construct $M$ so that the following invariants hold for all $a > 0$ and $b > 0$:
\begin{enumerate}
\item $w^A(a,b) \geq w^A(a-1,b)$.
\item $w^B(a,b) \geq w^B(a,b-1)$.
\item $w^A(a,b) \geq w^A(a-1,b+1)$.
\item $w^B(a,b) \geq w^B(a+1,b-1)$.
\item The support of $M[a,b]$ contains at most $3$ sequences.
\end{enumerate}
The first two desiderata capture the monotonicity properties we require of our algorithm.  Note that if $M$ satisfies these properties, then Algorithm \ref{alg:mech2} will be monotone and hence strategyproof.  The subsequent two invariants are cross-monotonicity properties, which will help us in the iterative construction of further entries of $M$. The final property limits the complexity of constructing and sampling from $M[a,b]$, implying that Algorithm \ref{alg:mech2} runs in polynomial time.

We now describe the way in which we construct distribution $M[a,b]$, given distributions $M[a',b']$ for all $a'+b' < a+b$.  We consider two distributions: the first selects a sequence according to $M[a-1,b]$ and appends an 'A', and the second selects a sequence according to $M[a,b-1]$ and appends a 'B'.  Call these two distributions $D_1$ and $D_2$, respectively.  What we would like to do is find some $\alpha$, $0 \leq \alpha \leq 1$, such that if we choose from distribution $D_1$ with probability $\alpha$ and distribution $D_2$ with probability $1 - \alpha$, then the resulting combined distribution (for $M[a,b]$) will satisfy $w^A(a,b) \geq w^A(a-1,b)$ and $w^B(a,b) \geq w^B(a,b-1)$.  Of course, this combined distribution may have support of size up to $6$ ($3$ from $D_1$ and $3$ from $D_2$) but we will show that it can be pruned to a distribution with the same expected influence for agents $A$ and $B$, with at most $3$ permutations in its support.

Our main technical result, Theorem \ref{lem:monotone2}, demonstrates that an appropriate value of $\alpha$, as described in the process sketched above, is guaranteed to exist and can be found efficiently.

\begin{theorem}
\label{lem:monotone2}
It is possible to construct table $M$ in such a way that the following
properties hold for all $a+b \geq 1$:
\begin{enumerate}
\item $w^A(a,b) \geq w^A(a-1,b)$
\item $w^B(a,b) \geq w^B(a,b-1)$.
\item $w^A(a,b) \geq w^A(a-1,b+1)$
\item $w^B(a,b) \geq w^B(a+1,b-1)$.
\end{enumerate}
Furthermore, the entries of $M$ can be computed in polynomial time.
\end{theorem}

\comment{
Notice that condition 3 in Lemma \ref{lem:monotone2} implies that
player $B$'s valuation is monotone increasing with his bid:
\begin{align}
w^A(a,b-1)  & \geq w^A(a,b) - \Delta^{\oplus B}(a,b) \nonumber \\
& = w^A(a,b) - [ w(a,b) - w(a,b-1) ] \nonumber \\
& = w^A(a,b) - \left[ \left(w^A(a,b) + w^B(a,b) \right) - \right. \nonumber \\
& \quad\quad\quad\quad\quad\quad - \left. \left(w^A(a,b-1) + w^B(a,b-1)\right) \right] \nonumber \\
& = w^A(a,b-1) + w^B(a,b-1) - w^B(a,b) \nonumber \\
& \Rightarrow  w^B(a,b) \geq w^B(a,b-1)
\end{align}}
\begin{proof}
We will proceed by induction on $t = a+b$.  The result is trivial for $t = 1$, so consider $t > 1$.

Our proof will be geometric.  We will associate with each entry of $M$, say $M[a,b]$, the point $(w^A(a,b), w^B(a,b))$ in $\R^2$.  Our approach will be to use the entries of $M$ with total budget less than $t$ to construct a certain convex region of potential points for each $M[a,b]$ with $a+b = t$.  We will then argue that there is a way to select a point from each convex region satisfying the 
four required properties.

We introduce some helpful notation.  For a point $p \in \R^2$, we let $p_A=p(1)$ denote the first coordinate of $p$, whereas the second coordinate of $p$ will be given by $p_B=p(2)$. 
Note that if $p$ is the point associated with $M[a,b]$, then $p_A = w^A(a,b)$.
Given a pair of points $p,q \in \R^2$, we will write $[p,q]$ for the line segment with endpoints $p$ and $q$.
Lastly, for two points $p, p' \in \R^2$, we say that $p \leq p'$ if $p_A \leq p_A'$ and $p_{B} \leq p_{B}'$.


The following simple geometric claim will be instrumental to our proof:
\begin{claim}
\label{cl:line_monotonicity}
Let $(L^{(1)}, \dotsc, L^{(t)})$ be a sequence of line segments in $R^2$, with $L^{(i)} = [p^{(i)}, q^{(i)}]$.  Suppose further that $p^{(i)} \leq q^{(j)}$ for all $i < j$.  Then there is a sequence of points $(r^{(1)}, \dotsc, r^{(t)})$, with $r^{(i)} \in L^{(i)}$, such that $r^{(i)} \leq r^{(j)}$ for all $i \leq j$.
\end{claim}

\begin{proof}
  By induction on $t$. The claim holds trivially for $t=1$.
  Suppose $t>1$ and the claim holds for all $t'<t$. Choose $r^{(1)}=p^{(1)}$. Also, for each $i >1$, define the line segment $\tilde L^{i}=\{z \in L^{(i)} : z \geq p^{(1)}\}$. Observe that (1) $\tilde L^{(i)} \subseteq L^{(i)}$, and (2) $q^{(i)} \in \tilde L^{(i)}$ which implies
$\tilde L^{(i)} \neq \emptyset$. Moreover $q^{(i)}$ is an endpoint of $\tilde L^{(i)}$ (by the assumption that $q^{(i)} \geq p^{(1)})$.
That is, $\tilde L^{(i)} = [\tilde p^{(i)},q^{(i)}]$ for some $\tilde p^{(i)}$. Note then that $\tilde p^{(i)}_A = \max \{p^{(i)}_A,p^{(1)}_A\}$ and $\tilde p^{(i)}_B = \max \{p^{(i)}_B,p^{(1)}_B\}$.

We now show that $\tilde p^{(i)} \leq q^{(j)}$ for all $2 \leq i < j$
by considering the $A$ and $B$ coordinates separately. Consider 
the $A$ coordinate for each of 
the two possible cases:
(1) $\tilde p^{(i)}_A = p^{(i)}_A \leq q^{(j)}_A$ by the assumption of the claim;
(2) $\tilde p^{(i)}_A = p^{(1)}_A \leq q^{(j)}_A$ by the definition of $\tilde L^{(i)}$. The proof for the $B$ coordinate is identical.    

We can now apply the inductive hypothesis on the sequence of sub-intervals $(\tilde L^{(2)},\ldots,\tilde L^{(t)})$ to obtain  a sequence of points 
$(r^{(2)}, \ldots, r^{(t)})$  satisfying $r^{(i)} \in \tilde L^{(i)}$ for all $i > 1$ and $r^{(i)} \leq r^{(j)}$ for all $1 < i < j$. Since $r^{(i)} \in \tilde L^{(i)}$, we also have $r^{(1)} \leq r^{(i)}$ for all $i > 1$. Thus, 
$(r^{(1)}, \ldots , r^{(t)})$ satisfies the required conditions for the claim. \qed
\end{proof}

Notice that the proof of Claim \ref{cl:line_monotonicity} is constructive, and the claimed sequence of points can be found in polynomial time given the sequence of line segments.  Our approach will be to argue that there is a sequence of line segments, one for each $a,b$ with $a+b=t$, satisfying the conditions of Claim \ref{cl:line_monotonicity}; we will then invoke the claim to select the point corresponding to each entry $M[a,b]$ with $a+b=t$. 

Recall that our construction of the DP table $M$ is inductive: in order to construct a distribution over allocation sequences for $M[a,b]$, such that $a+b=t$, we will rely on the two distributions that correspond to the entries $M[a-1,b]$ and $M[a,b-1]$. Let $v^{+A}(a-1,b)$ be defined as the 
pair $(w^A,w^B)$ of expected payoffs that results from sampling an allocation sequence from the entry $M[a-1,b]$ and appending an $`A'$ to it. We symmetrically define $v^{+B}(a,b-1)$ for the pair of expected payoffs resulting from
sampling a distribution from entry $M[a,b-1]$ and appending a $`B'$. 

For each $(a,b)$ with $a+b=t$, we define the line segment $L_c(a,b) = (v^{+A}(a-1,b), v^{+B}(a,b-1))$.  Note that this is the set of points (i.e., distributions) that \emph{can} be implemented by randomizing between the above two ``appending policies''.

Next, we define the following two half-spaces: $F_A(a,b) = \{ z \in \R^2: z_A \geq w^A(a-1,b) \}, F_B(a,b) =  \{ z \in \R^2: z_B \geq w^B(a,b-1) \}$.
We then let $L(a,b)= L_c(a,b) \cap F_A(a,b) \cap F_B(a,b)$.  Note that $L(a,b)$ is the set of points in $L_c(a,b)$ that satisfy monotonicity conditions 1 and 2 in our theorem statement.  We now show that $L(a,b)$ is non-empty; this will follow by induction from the fact that the entries of $M$ with total budget $t-1$ satisfy the four conditions conditions of the theorem.

\begin{lemma}
\label{lem:non-empty_inter}
  $L(a,b) \neq \emptyset$
\end{lemma}
\begin{proof}
We begin with the following notation and observations: 
\begin{enumerate}
\item Let $\ell_1= (x_1,y_1) = (w^{A}(a,b-1),w^B(a,b-1))$, and $\ell_2= (x_2,y_2) = (w^A(a-1,b),w^B(a-1,b))$.  That is, $\ell_1$ and $\ell_2$ are the points associated with $M[a,b-1]$ and $M[a-1,b]$. By the inductive hypothesis at $a+b-1 = t-1$, 
cross monotonicity implies $x_1 \geq x_2$ and $y_1 \leq y_2$.  
\item Let $p_1 = w^{A}(a,b-1) - v^{+B}(a,b-1)_A$, and  $q_1=v^{+B}(a,b-1)_B - w^{B}(a,b-1)$.  That is, $p_1$ (resp. $q_1$) is 
the loss in welfare to player $A$ (resp. the gain in welfare to Player $B$) 
when we add a node to $B$'s allocation starting from an allocation drawn from $M[a,b-1]$.
Note that $p_1 \leq q_1$ due to the monotonicity of the social welfare function.
\item Similarly, $p_2 = v^{+A}(a-1,b)_A  - w^{A}(a-1,b), q_2= w^{B}(a-1,b) - v^{+A}(a-1,b)_B$.  That is, $p_2$ (resp. $q_2$) is the gain in $A$'s welfare (resp. loss in $B$'s welfare) upon adding another node to $A's$ allocation starting from an allocation drawn from $M[a-1,b]$. It 
follows again that $p_2 \geq q_2$.
\end{enumerate}

Given this notation, Lemma \ref{lem:non-empty_inter} is implied by the following claim applied to $\ell_1, \ell_2, p_1, q_1, p_2, q_2$ as defined above. 

\begin{claim}
\label{cl:non_negative}
  Consider two points $\ell_1=(x_1,y_1), \ell_2=(x_2,y_2) \in \mathbb{R}^2$ such that $x_1 \geq x_2$ and $y_1 \leq y_2$.
  Let $e_1=(x_1 - p_1, y_1+q_1), e_2=(x_2+p_2,y_2-q_2) \in \mathbb{R}^2$ such that $0 \leq p_1 \leq q_1$ and $p_2 \geq q_2 \geq 0$. Then there exists a sub-interval of the line-segment $[e_1,e_2]$ that is contained in the region $D_{+}=\{(x,y) \in \mathbb{R}_+^2: x \geq x_1 \text{ and } y \geq y_2 \}$.
\end{claim}
\begin{proof}
By translating the line $[\ell_1,\ell_2]$ (maintaining the same slope), we can assume without loss of generality 
that $x_1=0, y_2=0$ and 
$x_2,y_1 \geq 0$, so that $D_{+}=\R^2_{\geq 0}$.

First,  if either $e_1$ or $e_2$ are in $\mathbb{R}^2_{\geq 0}$, the claim is trivial, so we assume that $e_1,e_2 \notin \mathbb{R}_{\geq 0}^2$, implying that $x < p_1$ and $y < q_2$.

Consider a point $e'$ on the line segment $[e_1,e_2]$, given by a convex combination $e'=\alpha \cdot e_1 + (1-\alpha) \cdot e_2$, for $\alpha \in [0,1]$. Having $e' \in \mathbb{R}^2_{\geq 0}$ is equivalent to the following two conditions:
\begin{align}
\alpha (x-p_1) + (1-\alpha) p_2  \geq 0 \label{ineq:1} \\
 \alpha q_1 + (1-\alpha) (y - q_2) \geq 0 \label{ineq:2}
\end{align}
Using our assumption that $x < p_1$, the first inequality gives
\[ \alpha \leq \frac{p_2}{p_1+p_2 -x }\]
whereas the second inequality gives (using our assumption that $y < q_2$):
\[ \alpha \geq \frac{q_2 - y}{q_1+q_2 - y}\]

Let $u= (q_2 - y)/(q_1+q_2 - y),  v = p_2/(p_1+p_2 -x )$.
Note that $u,v \in [0,1]$. Proving that $u \leq v$ will conclude the claim 
as that determines the range $u \leq \alpha \leq v$ defining the required sub-interval.
First, we have that
\begin{align*}
  u \leq \frac{p_2 - y}{q_1+p_2 - y} \leq \frac{p_2}{q_1+p_2}
\end{align*}
where the first inequality follows from our assumption that $p_2 \geq q_2$.
Second, our assumption that $p_1 \leq q_1$ implies that
\begin{align*}
  v \geq \frac{p_2}{q_1 + p_2 -x} \geq \frac{p_2}{q_1+p_2}
\end{align*}
where the last inequality follows again by our assumption that $x < p_1 (\leq q_1)$.
This shows that $u \leq v$, thereby concluding the proof.\qed
\end{proof}
\end{proof}

Write $p(a,b)$ for the endpoint of $L(a,b)$ that is closer to $v^{+A}(a-1,b)$, and $q(a,b)$ for the endpoint closer to $v^{+B}(a,b-1)$.
Consider the sequence of line segments $( L(t,0), L(t-1,1), \dotsc, L(0,t) )$ (i.e., the intervals corresponding to the $t$'th diagonal). To complete the 
proof of Lemma \ref{lem:monotone2}, it remains to show how to pick solutions from each of these intervals in a way that maintains the two cross-monotonicity constraints. 
 
It's enough to show that 
\[ p(t-i,i)_A \geq q(t-j,j)_A\]
and 
\[p(t-i,i)_B \leq q(t-j,j)_B\] 
for all $i < j$, since then Claim~\ref{cl:line_monotonicity} tells us how to choose a point from each line segment such that all of our cross-monotonicity properties are satisfied.  Notice that the first inequality goes in the opposite direction than would be implied by $p(t-i,i) \leq q(t-j,j)$. However, we can still apply Claim~\ref{cl:line_monotonicity} due to symmetry and the independence of the two axes. That is, we can reflect the line segments
in the $A$ axis so that $(p,q)$ becomes $(-p,q)$, apply Claim \ref{cl:line_monotonicity} to
obtain the points on each line, and then reflect back in the $A$ axis to 
obtain the required points.  

Let us prove the inequality on the first entry ($A$'s utility). Note that $p(t-i, i)_A \geq w^A(t-i-1,i)$ from the definition of $v^{+A}(a-1,b)$ and the intersecting half-space $F_A(a,b)$.  Furthermore, for all $j > i$, we have $q(t-j,j)_A \leq w^A(t-j, j-1)$. Indeed, if $q(t-j,j)=v^{+B}(t-j,j-1)$, then the 
inequality follows from the definition of $v^{+B}(a,b-1)$ and
adverse competition. Otherwise ($q \neq v^{+B}(t-j,j-1)$), the manner of the selection of $q(t-j,j)$ implies that $v^{+B}(t-j,j-1)_A < w^{A}(t-j-1,j)$, and so by the intersection with $F_A(a,b)$, we know that $q_(t-j,j)=w^{A}(t-j-1,j) \leq w^{A}(t-j,j-1)$ (by cross-monotonicity for $t-1$). Since $j > i$, we then have
\[ p(t-i,i)_A \geq w^A(t-i-1,i) \geq w^A(t-j, j-1) \geq q(t-j,j)_A \]
where the second inequality follows from cross-monotonicity 
(applied $j-i$ times) for budgets summing to $t-1$.

The argument for the second coordinate is identical. 
\comment{

 Here it is for completeness.  Note that $A(k-i, i).y \leq w^B(k-i-1,i)$ from the definition of $M^{+A}(a-1,b)$ with $(a,b) = (k-i,i)$.  Furthermore, for all $j > i$, we have $B(k-j,j).y \geq w^B(k-j, j-1)$ from the definition of $M^{+B}(a,b-1)$ and the intersecting halfspace $\{ A : A.y \geq w^B(a,b-1) \}$ with $(a,b) = (k-j,j)$.  Since $j > i$, we then have
\[ A(k-i,i).y \leq w^B(k-i-1,i) \leq w^B(k-j, j-1) \leq B(k-j,j).y \]
where we used cross-monotonicity for budgets summing to $k-1$.
}
We can now apply Claim \ref{cl:line_monotonicity} to choose points from each of the line segments that respect cross-monotonicity.  Those will be the points we use to populate the $M$ matrix on the diagonal corresponding to total budget $t$.
\comment{
Given $t=a+b>1$, we generate distribution 
$M[a,b]$ by constructing a value $\alpha$, then with probability
$\alpha$ we choose from the distribution of sequences (i.e.\ specifying an
order of allocations) $M[a-1,b]$ and append $A$, 
or else with
probability $1-\alpha$ we choose from the distribution $M[a,b-1]$ 
and append $B$.
We must show the existance of some $\alpha$ value such that
the three condition required by Lemma \ref{lem:monotone2} will hold.

Conditions 2 and 3 of the lemma describe an interval in which the
value $w^A(a,b)$ must fall, call it $I_m^{a,b}$.
That is,
\[ I_m^{a,b} = [w^A(a-1, b), w^A(a,b-1) + \Delta^{\oplus B}(a,b)]. \]
Claim \ref{cl:mon0} shows that this interval is non-empty.
\begin{claim}
\label{cl:mon0}
$w^A(a-1,b) \leq w^A(a,b-1) + \Delta^{\oplus B}(a,b)$.
\end{claim}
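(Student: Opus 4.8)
The plan is to unfold the definition $\Delta^{\oplus B}(a,b) = w(a,b) - w(a,b-1)$ and rewrite the claimed inequality in a more transparent form. Substituting this definition and $w(a,b-1) = w^A(a,b-1) + w^B(a,b-1)$ into $w^A(a-1,b) \le w^A(a,b-1) + \Delta^{\oplus B}(a,b)$, the terms $w^A(a,b-1)$ cancel and the claim becomes equivalent to
\[ w^A(a-1,b) + w^B(a,b-1) \le w(a,b). \]
So it suffices to show that the total welfare at budgets $(a,b)$ dominates the sum of $A$'s value at $(a-1,b)$ and $B$'s value at $(a,b-1)$.

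First I would bound the left-hand side by the welfare $w(a,b-1)$. The entry $(a,b-1)$ has $a + (b-1) < a+b$, so the inductive hypothesis applies to it; invariant~1 of Lemma~\ref{lem:monotone2} instantiated at $(a,b-1)$ reads $w^A(a,b-1) \ge w^A(a-1,(b-1)+1) = w^A(a-1,b)$. Adding $w^B(a,b-1)$ to both sides gives
\[ w^A(a-1,b) + w^B(a,b-1) \le w^A(a,b-1) + w^B(a,b-1) = w(a,b-1). \]
This step is clean and $\alpha$-independent, and it reduces the claim to showing $w(a,b-1) \le w(a,b)$.

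It then remains to argue that the expected total welfare is non-decreasing when $B$'s budget grows by one. Here I would invoke the recursive construction: $M[a,b]$ is the mixture $\alpha D_1 \oplus (1-\alpha)D_2$, where $D_2$ draws an allocation from $M[a,b-1]$ and appends the element maximizing the marginal increase in $f$ for player $B$. Since $f$ is non-decreasing, every such append contributes a nonnegative marginal to the total welfare, so the $D_2$ branch has expected welfare at least $w(a,b-1)$.

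The hard part will be promoting this bound from the single branch to the whole mixture, i.e.\ establishing $w(a,b) \ge w(a,b-1)$ rather than merely $w_{D_2} \ge w(a,b-1)$. Along the $D_1$ branch the natural comparison is against $w(a-1,b)$, and bridging this gap is essentially a total-welfare monotonicity statement (equivalently, the $B$-side cross-monotonicity $w^B(a-1,b) \ge w^B(a,b-1)$), which is not an immediate consequence of invariant~1. I expect to resolve this either by carrying total-welfare monotonicity as an auxiliary invariant maintained throughout the construction, or by noting that non-emptiness of $I_m^{a,b}$ only requires a single feasible witness: taking $\alpha = 0$ yields $w(a,b) = w_{D_2} \ge w(a,b-1) \ge w^A(a-1,b) + w^B(a,b-1)$, so the upper endpoint of $I_m^{a,b}$ indeed lies above its lower endpoint.
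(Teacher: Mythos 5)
Your proof follows the paper's own route at its core: the paper, too, proves Claim~\ref{cl:mon0} by applying condition~1 of Lemma~\ref{lem:monotone2} inductively at the entry $(a,b-1)$ to get $w^A(a-1,b) \leq w^A(a,b-1)$, and then simply appends ``$\leq w^A(a,b-1) + \Delta^{\oplus B}(a,b)$'' --- that is, it treats $\Delta^{\oplus B}(a,b) \geq 0$ as self-evident and says nothing more. Your reformulation of the claim as $w^A(a-1,b) + w^B(a,b-1) \leq w(a,b)$ is algebraically equivalent, and your use of invariant~1 is the identical reduction, so on the substantive step the two arguments coincide.

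Where you go beyond the paper is in refusing to take $\Delta^{\oplus B}(a,b) \geq 0$ for granted, and your suspicion is warranted: the paper never justifies this inequality. As you observe, monotonicity of $f$ only gives $w_{D_2} \geq w(a,b-1)$ and $w_{D_1} \geq w(a-1,b)$, so the mixture satisfies $w(a,b) \geq \alpha\, w(a-1,b) + (1-\alpha)\, w(a,b-1)$, and concluding $w(a,b) \geq w(a,b-1)$ would require comparing $w(a-1,b)$ against $w(a,b-1)$ --- a $B$-side statement that the maintained invariants (which are asymmetric: cross-monotonicity is tracked only for player $A$) do not supply. There is a further wrinkle that both you and the paper inherit: $\Delta^{\oplus B}(a,b)$ is defined through $M[a,b]$, the very distribution being constructed, so the upper endpoint of $I_m^{a,b}$ is not a fixed number but depends on the $\alpha$ eventually chosen. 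Your second resolution --- reading the claim as a feasibility statement and exhibiting the $\alpha = 0$ candidate, for which $w(a,b) = w_{D_2} \geq w(a,b-1) \geq w^A(a-1,b) + w^B(a,b-1)$ --- is sound and is the cheapest way to make the claim true in the sense in which the construction actually uses it. In short: same approach as the paper, but your version is the more careful one; the paper's one-line proof silently assumes exactly the inequality you flag as the hard part.
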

\begin{proof}
This follows by induction applied to condition 1 of the Lemma,
which implies $w^A(a-1,b) \leq w^A(a,b-1) \leq w^A(a,b-1) + \Delta^{\oplus B}(a,b)$.
\end{proof}
Let $W_1^A$ (respectively, $W_1^B$) denote the expected
payoff of player $A$ (respectively, player $B$) if we let
$\alpha=1$.  That is, $W_1^A$ is the expected influence of player
$A$ if we select a permutation from 
$M[a-1,b]$ and append $A$, then use this permutation when applying
our greedy algorithm.  We define $W_0^A$ and $W_0^B$ similarly for $\alpha=0$.
The following claim follows from the adverse competition assumption. 
\begin{claim}
\label{cl:mon2}
$W_1^A \geq w^A(a-1,b)$ and $W_0^A \leq w^A(a,b-1)$.
\end{claim}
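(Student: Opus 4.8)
The plan is to exploit the myopic nature of the greedy allocation in Algorithm~\ref{alg:mech2} together with the two monotonicity hypotheses on the individual valuations. The crucial structural observation is that, because the greedy algorithm chooses each element based only on the current partial allocation and never revises earlier choices, appending a single symbol to the end of a sequence does not alter any of the allocations made in the preceding rounds. Consequently, if we draw a sequence from $M[a-1,b]$ and append an $A$, the greedy run on the extended sequence first reconstructs exactly the allocation $(S_A,S_B)$ that the original $M[a-1,b]$ sequence would have produced, and only then adds one further greedy element $u$ to $S_A$, yielding $(S_A\cup\{u\},S_B)$. Symmetrically, appending a $B$ to a sequence drawn from $M[a,b-1]$ produces $(S_A,S_B\cup\{u'\})$, where $(S_A,S_B)$ is the allocation that the original $M[a,b-1]$ sequence induces.

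For the first inequality I would write $W_1^A$ as the expectation, over $(S_A,S_B)$ drawn from $M[a-1,b]$, of $f_A(S_A\cup\{u\},S_B)$. Since $f_A$ is non-decreasing in player $A$'s own set, we have $f_A(S_A\cup\{u\},S_B)\geq f_A(S_A,S_B)$ pointwise for every realization, irrespective of which element $u$ the greedy rule selects. Taking expectations over $M[a-1,b]$ and invoking linearity of expectation then gives $W_1^A\geq \EE[f_A(S_A,S_B)]=w^A(a-1,b)$, as required.

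For the second inequality I would similarly write $W_0^A=\EE[f_A(S_A,S_B\cup\{u'\})]$, with the expectation taken over $(S_A,S_B)$ drawn from $M[a,b-1]$. Here the relevant hypothesis is the adverse competition assumption, which states that $f_A$ is non-increasing in the allocation to the other player; hence $f_A(S_A,S_B\cup\{u'\})\leq f_A(S_A,S_B)$ pointwise, again for any choice of $u'$. Taking expectations over $M[a,b-1]$ yields $W_0^A\leq \EE[f_A(S_A,S_B)]=w^A(a,b-1)$.

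I expect essentially no hard step once the myopia property is isolated; the argument is a pointwise application of the two monotonicity assumptions followed by linearity of expectation. The one point that genuinely requires care is the justification that the appended symbol leaves the prior $a+b-1$ greedy choices untouched, so that $W_1^A$ and $W_0^A$ are really expectations taken against the same underlying allocation distributions $M[a-1,b]$ and $M[a,b-1]$ that define $w^A(a-1,b)$ and $w^A(a,b-1)$. This is exactly the myopia property emphasized immediately after the statement of Algorithm~\ref{alg:mech2}, and it is what lets the two appending operations reduce to ``add one element to $A$'s set'' and ``add one element to $B$'s set'' on top of an unchanged base allocation.
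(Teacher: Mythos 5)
Your proposal is correct and follows essentially the same argument as the paper: for each fixed sequence in the support of $M[a-1,b]$ (resp.\ $M[a,b-1]$), appending an $A$ (resp.\ a $B$) leaves the prefix allocation unchanged by myopia and then adds one element to $A$'s (resp.\ $B$'s) set, so the non-decreasing property of $f_A$ in its own set (resp.\ the adverse competition assumption) gives the inequality pointwise, and taking expectations finishes the proof. The paper states this more tersely but relies on exactly the same two hypotheses and the same myopia property you isolate.
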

\begin{proof}
The first part of the claim follows because, for each fixed ordering
in the support of $M[a-1,b]$, appending an $A$ to that ordering can only
increse the welfare of agent $A$.  
Likewise, the second part of the claim follows because, for each
ordering in the support of $M[a,b-1]$, appending a $B$ can only
decrease the welfare of agent $A$.
\end{proof}

We think of $W^A_1$ and $W^A_0$ as the influence for agent $A$ for
distributions that we can construct.  Let $I_c^{a,b}$ denote the interval
between $W^A_1$ and $W^A_0$. Note that we do not know which
of $W^A_1$ or $W^A_0$ is greater.  Claim \ref{cl:mon2} implies that:
\begin{claim}
$I^{a,b}_m \cap I^{a,b}_c \neq \emptyset$.
\end{claim}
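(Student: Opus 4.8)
The plan is to read this as a pure interval–intersection statement and discharge it from the two preceding claims. Both intervals are closed and nonempty: $I_c^{a,b}$ is by definition the segment with endpoints $W_0^A$ and $W_1^A$, and $I_m^{a,b} = [\,w^A(a-1,b),\; w^A(a,b-1) + \Delta^{\oplus B}(a,b)\,]$ is nonempty by Claim \ref{cl:mon0}. Since we do not know a priori whether $W_1^A \geq W_0^A$ or $W_1^A < W_0^A$, I would record $I_c^{a,b} = [\,\min(W_0^A,W_1^A),\; \max(W_0^A,W_1^A)\,]$ and invoke the elementary fact that two nonempty closed intervals $[p,q]$ and $[r,s]$ meet iff $p \leq s$ and $r \leq q$. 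Thus it suffices to establish the two ``crossing'' inequalities
\[
w^A(a-1,b) \;\leq\; \max(W_0^A,W_1^A)
\qquad\text{and}\qquad
\min(W_0^A,W_1^A) \;\leq\; w^A(a,b-1) + \Delta^{\oplus B}(a,b).
\]

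Both reduce to Claim \ref{cl:mon2}. For the first, that claim gives $W_1^A \geq w^A(a-1,b)$, and since $\max(W_0^A,W_1^A) \geq W_1^A$, the left inequality is immediate. For the second, the claim gives $W_0^A \leq w^A(a,b-1)$, and trivially $\min(W_0^A,W_1^A) \leq W_0^A$; it then remains to observe that $\Delta^{\oplus B}(a,b) \geq 0$, i.e.\ that the expected total welfare does not decrease when $B$'s budget grows from $b-1$ to $b$. I would justify this directly from the fact that $f$ is monotone non-decreasing (every sequence contributing to $M[a,b]$ is, on its first $a+b-1$ coordinates, a valid smaller allocation, and appending one further greedily chosen element can only raise $f$). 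Chaining, $\min(W_0^A,W_1^A) \leq W_0^A \leq w^A(a,b-1) \leq w^A(a,b-1) + \Delta^{\oplus B}(a,b)$, which is the right inequality.

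The conceptual reason the interval formulation is the right vehicle — and where the genuine difficulty lies — is that conditions $2$ and $3$ of Lemma \ref{lem:monotone2} pull the mixing parameter $\alpha$ in opposite directions: condition $2$, namely $w^A(a,b) \geq w^A(a-1,b)$, favors weight on the ``append $A$'' distribution (pushing $w^A(a,b)$ toward $W_1^A$), whereas condition $3$, which unwinds to $w^B(a,b) \geq w^B(a,b-1)$, favors the ``append $B$'' distribution (pushing toward $W_0^A$). Claim \ref{cl:mon2} is precisely the assertion that the two constructible extremes straddle the target interval from the correct sides — $W_1^A$ at or above its lower end, $W_0^A$ at or below its upper end — so that $I_c^{a,b}$ cannot slip past $I_m^{a,b}$ on either side. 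The one step to treat with care is the nonnegativity of $\Delta^{\oplus B}(a,b)$, since it is itself a welfare difference taken across the recursive construction; I would derive it from monotonicity of $f$ rather than from any order-independence of the greedy total value, which does not hold in general. Once the intersection is shown nonempty, any target in $I_m^{a,b}\cap I_c^{a,b}$ is attained by a suitable $\alpha\in[0,1]$, since $w^A(a,b)$ is linear in $\alpha$ and sweeps out exactly $I_c^{a,b}$; this is what the remainder of the Lemma's proof consumes.
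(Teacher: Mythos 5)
Your argument has exactly the structure of the paper's own proof: the paper rules out $I^{a,b}_c$ lying entirely above $I^{a,b}_m$ via the chain $W_0^A \leq w^A(a,b-1) \leq w^A(a,b-1) + \Delta^{\oplus B}(a,b)$, and entirely below via $W_1^A \geq w^A(a-1,b)$, i.e.\ precisely your two crossing inequalities from Claim \ref{cl:mon2}, with Claim \ref{cl:mon0} supplying nonemptiness of $I_m^{a,b}$. The one place where you go beyond the paper, however --- supplying a proof that $\Delta^{\oplus B}(a,b) \geq 0$ --- contains a genuine gap. The distribution $M[a,b]$ is a mixture of two branches: with probability $\alpha$ a sequence of $M[a-1,b]$ with an `A' appended, and with probability $1-\alpha$ a sequence of $M[a,b-1]$ with a `B' appended. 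Applying monotonicity of $f$ prefix-by-prefix, as you propose, therefore yields only
\[
w(a,b) \;\geq\; \alpha\, w(a-1,b) \;+\; (1-\alpha)\, w(a,b-1),
\]
and \emph{not} $w(a,b) \geq w(a,b-1)$: the sequences in the first branch extend allocations of $M[a-1,b]$, not of $M[a,b-1]$, and nothing in the construction forces $w(a-1,b) \geq w(a,b-1)$. Without the MeI assumption these two totals are incomparable --- the maintained cross-monotonicity (condition 1 of Lemma \ref{lem:monotone2}, applied at $(a,b-1)$) gives $w^A(a,b-1) \geq w^A(a-1,b)$, i.e.\ it pulls in the \emph{opposite} direction on $A$'s share, and no invariant at all is kept on $w^B(a-1,b)$ versus $w^B(a,b-1)$. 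So the inference "append a greedy element can only raise $f$, hence $\Delta^{\oplus B}(a,b)\geq 0$" fails for the mixture.

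The repair is local, and it is what the paper's terse chain is implicitly checking. The only point of $I_c^{a,b}$ that must clear the upper end of $I_m^{a,b}$ is the endpoint $W_0^A$, which is realized at $\alpha = 0$; and the $\alpha = 0$ distribution extends $M[a,b-1]$ \emph{exactly}, so there your monotonicity argument applies verbatim: every sequence is an $M[a,b-1]$ sequence with a `B' appended, hence the associated total-welfare difference is nonnegative. Equivalently, use the identity the paper records right after the statement of Lemma \ref{lem:monotone2}: condition 3 is the same as $w^B(a,b) \geq w^B(a,b-1)$, and at $\alpha = 0$ this reads $W_0^B \geq w^B(a,b-1)$, which is just monotonicity of $f_B$ in $B$'s own allocation. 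With the nonnegativity claim restricted to the append-$B$ branch in this way (rather than asserted for the full mixture), your interval-intersection argument goes through and coincides with the paper's proof.
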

\begin{proof}
It cannot be that $I^{a,b}_c$ lies entirely above $I^{a,b}_m$, since
$W_0^A \leq w^A(a,b-1) \leq w^A(a,b-1) +  \Delta^{\oplus B}(a,b)$.
Also, it cannot be that $I^{a,b}_c$ lies entirely below $I^{a,b}_m$, since
$W_1^A \geq w^A(a-1,b)$.  Thus $I^{a,b}_m \cap I^{a,b}_c \neq \emptyset$.
\end{proof}
We can therefore write $I^{a,b} = I^{a,b}_m \cap I^{a,b}_c$.  Note that
any point in $I^{a,b}$ corresponds to a distribution we can construct for
$M[a,b]$, which will satisfy conditions 2 and 3 of our Lemma.  It remains to
show that we can choose this point so that condition 1 of Lemma \ref{lem:monotone2} 
will also be satisfied.
Our claim is that if we always choose $\alpha$ so that $w^A(a,b)$ is the
minimum endpoint of $I^{a,b}$, then condition 1 will be satisfied.

With the above in mind, we will set
\begin{align}
\alpha = \argmin_{\alpha \in [0,1]}\{ \alpha W_1^A + (1-\alpha)W_0^A \in I^{a,b} \}
\end{align}
Note that if we use this value of $\alpha$ to randomize between appending $A$ to
a permutation drawn from $M[a-1,b]$ and appending $B$ to a permutation from $M[a,b-1]$, 
then the resulting value of $w^A(a,b)$ will indeed be $\min I^{a,b}$.  

For all $a' + b' = t$, define $M[a',b']$ as described above.  We now argue that this 
choice satisfies condition 1 of Lemma \ref{lem:monotone2}.

\begin{claim}
\label{cl:exmon2}
If $a \geq 1$ then $w^A(a,b) \geq w^A(a-1,b+1)$.
\end{claim}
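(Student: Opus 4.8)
The plan is to reduce the desired inequality to a comparison between two entries at adjacent recursion levels and then unfold one more step of the construction. Since conditions~2 and~3 have already been secured at level $t=a+b$ (each $w^A(a',b')$ with $a'+b'=t$ was chosen inside its monotonicity interval $I_m^{a',b'}$), I already have $w^A(a,b)\ge w^A(a-1,b)$. Hence it suffices to establish the single inequality $w^A(a-1,b)\ge w^A(a-1,b+1)$, after which chaining with condition~2 yields $w^A(a,b)\ge w^A(a-1,b)\ge w^A(a-1,b+1)$. The point to keep in mind is that $w^A(a-1,b+1)$ lives at the \emph{same} level $t$ as the target, so the inductive hypothesis cannot be invoked on it directly; instead I will open up its defining construction and bound it by the level-$(t-1)$ quantity $w^A(a-1,b)$.

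For the main case $a\ge 2$, recall that $M[a-1,b+1]$ was built so that $w^A(a-1,b+1)=\min I^{a-1,b+1}$, where $I^{a-1,b+1}=I_m^{a-1,b+1}\cap I_c^{a-1,b+1}$ is nonempty. Because the minimum of a nonempty intersection of intervals is the larger of the two left endpoints,
\[ w^A(a-1,b+1)=\max\bigl(\,w^A(a-2,b+1),\ \min(\widehat W_0^A,\widehat W_1^A)\,\bigr), \]
where $\widehat W_0^A,\widehat W_1^A$ denote the two constructible values of $A$ for the $(a-1,b+1)$ step (drawing an ordering from $M[a-1,b]$ and appending $B$, respectively drawing from $M[a-2,b+1]$ and appending $A$), and $w^A(a-2,b+1)$ is the left endpoint of $I_m^{a-1,b+1}$. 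I would then bound each argument of the max separately by $w^A(a-1,b)$: the first bound, $w^A(a-2,b+1)\le w^A(a-1,b)$, is precisely condition~1 of Lemma~\ref{lem:monotone2} at level $t-1$, available by induction; the second, $\min(\widehat W_0^A,\widehat W_1^A)\le \widehat W_0^A\le w^A(a-1,b)$, follows from Claim~\ref{cl:mon2} applied to the $(a-1,b+1)$ construction (appending a $B$ to an ordering drawn from $M[a-1,b]$ only decreases $A$'s welfare). Since both arguments of the maximum are at most $w^A(a-1,b)$, so is the maximum, giving $w^A(a-1,b+1)\le w^A(a-1,b)$ as required.

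For the boundary case $a=1$ the entry $M[0,b+1]$ is not produced by the recursion but is the degenerate all-$B$ distribution, so the argument above does not apply and I instead invoke adverse competition directly. The greedy allocation is deterministic and myopic, so the $b$ seeds that $M[0,b]$ places for $B$ form a subset of the $b+1$ seeds placed by $M[0,b+1]$; enlarging $B$'s set can only lower $A$'s payoff, whence $w^A(0,b+1)\le w^A(0,b)$, and condition~2 again closes the chain $w^A(1,b)\ge w^A(0,b)\ge w^A(0,b+1)$.

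The main obstacle is the level-matching issue flagged above: the quantity being dominated, $w^A(a-1,b+1)$, sits at the current level $t$, so the whole argument hinges on rewriting it as the maximum of two left endpoints and recognizing that one of them is controlled by the inductive cross-monotonicity (condition~1 at level $t-1$) while the other is controlled by Claim~\ref{cl:mon2}. Correctly unwinding the recursion for $M[a-1,b+1]$---in particular identifying $\widehat W_0^A$ as ``draw from $M[a-1,b]$, append $B$''---is exactly what makes the two bounds land on the common value $w^A(a-1,b)$.
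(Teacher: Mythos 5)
Your proof is correct and follows essentially the same route as the paper's: both unwind the construction of $M[a-1,b+1]$, express its value via the larger of the two left endpoints of $I_m^{a-1,b+1}$ and $I_c^{a-1,b+1}$, bound the former by the inductive cross-monotonicity (condition 1 at level $t-1$) and the latter by Claim~\ref{cl:mon2} (appending a $B$ can only hurt player $A$), and then chain with $w^A(a,b) \geq w^A(a-1,b)$. Your explicit treatment of the boundary case $a=1$, where $M[0,b+1]$ is the degenerate all-$B$ distribution rather than a recursively constructed entry, is a small but welcome addition that the paper's proof leaves implicit.
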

\begin{proof}
Note first that $w^A(a,b) \geq w^A(a-1,b)$, since $w^A(a,b) \in I^{a,b}_m$.
Consider now the value of $w^A(a-1,b+1)$, which is the minimum of $I^{a-1,b+1}_c
\cap I^{a-1,b+1}_m$. We will now bound the value of $w^A(a-1,b+1)$,
by providing an upper bound on both the minimal endpoint of $I_c^{a-1,b+1}$
and the minimal endpoint of $I_m^{a-1,b+1}$. 

For budgets $(a-1,b+1)$, the lower endpoint of $I^{a-1,b+1}_m$ is
$w^A(a-2,b+1)$. On the other hand, $I^{a-1,b+1}_c$ contains point $W_0^A$,
which is the influence to player $A$ when we choose a permutation according to
$w^A(a-1,b)$ and append a `B'.  However, since allocating an additional item 
to player $B$ in any fixed allocation can only degrade player $A$'s payoff,
it must be that $W_0^A \leq w^A(a-1,b)$.

Thus the lower endpoint of $I^{a-1,b+1}_m \cap
I^{a-1,b+1}_c$ is at most $\max\{w^A(a-2,b+1),w^A(a-1,b)\}$.  But $w^A(a-2,b+1) \leq w^A(a-1,b)$ by induction 
(using condition 1 of Lemma \ref{lem:monotone2}).  

We therefore conclude $w^A(a-1,b+1) \leq \max\{w^A(a-2,b+1),w^A(a-1,b)\} \leq w^A(a-1,b) \leq w^A(a,b)$, as required.
\end{proof}
}
We have thus shown that table $M$ can be filled with distributions that
satisfy conditions 1-4 of Theorem \ref{lem:monotone2}.   

It remains to discuss the
complexity of computing the entries of $M$.  To this point we have not 
bounded the size of our distributions' supports.  We will
modify the argument to show that the number of permutations
required for each table entry $M[a,b]$ can be limited to only three,
by induction on $t$.

Consider the distribution constructed for $M[a,b]$.  The support of this distribution
has size at most $6$: the three permutations in the support of $M[a-1,b]$ with 
$A$ appended, plus the three permutations in the support of $M[a,b-1]$ with $B$
appended.  Each of these six permutations implies an allocation, say $(S_1, T_1), \dotsc, (S_6, T_6)$.
For each allocation, we consider the two-dimensional
point $(f_A(S_i,T_i), f_B(S_i,T_i))$ representing the welfare to A and B for the given
allocation.  We can interpret our construction of $M[a,b]$ as implementing a 
point $(w^A(a,b), w^B(a,b))$ with certain properties, such that this point lies in the convex hull of the six points
$(f_A(S_1,T_1), f_B(S_1,T_1)), \dotsc, (f_A(S_6,T_6), f_B(S_6,T_6))$.

We now use the following well-known theorem \cite{citeulike:340551}: 
\begin{theorem}[Carathéodory]
Given a set $V \subset \mathbb{R}^n$ and a point $p \in Conv V$ --- the convex hull of $V$, there exists a subset $A \subset V$ such that $|A| \leq n + 1$ and $p \in Conv A$.
\end{theorem}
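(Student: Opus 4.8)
The plan is to prove Carathéodory's theorem by a dimension-counting reduction: start from an arbitrary finite convex representation of $p$ and repeatedly eliminate points until at most $n+1$ remain. Since $p \in Conv\, V$, by the definition of the convex hull as the set of all finite convex combinations, there is a finite subset $\{v_1, \dotsc, v_m\} \subseteq V$ together with weights $\lambda_1, \dotsc, \lambda_m \geq 0$ satisfying $\sum_{i=1}^m \lambda_i = 1$ and $p = \sum_{i=1}^m \lambda_i v_i$. If $m \leq n+1$ there is nothing to do, so I would assume $m > n+1$ and exhibit a representation using strictly fewer points.

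The key observation is that when $m > n+1$, the $m-1$ displacement vectors $v_2 - v_1, \dotsc, v_m - v_1$ all lie in $\mathbb{R}^n$ and number more than $n$, so they are linearly dependent. First I would extract scalars $\mu_2, \dotsc, \mu_m$, not all zero, with $\sum_{i=2}^m \mu_i (v_i - v_1) = 0$, and then set $\mu_1 = -\sum_{i=2}^m \mu_i$. This produces coefficients $\mu_1, \dotsc, \mu_m$, not all zero, satisfying simultaneously $\sum_{i=1}^m \mu_i v_i = 0$ and $\sum_{i=1}^m \mu_i = 0$; these two identities are exactly what is needed to perturb the representation of $p$ without disturbing either the point itself or the normalization of the weights.

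Next I would slide the original representation along this null combination. For any real $t$ the identity $p = \sum_{i=1}^m (\lambda_i - t\mu_i) v_i$ holds because the $\mu_i v_i$ sum to zero, and the perturbed weights still sum to $1$ because the $\mu_i$ sum to zero. It then remains to pick $t$ so that every perturbed weight stays nonnegative while at least one is driven to exactly zero. Since the $\mu_i$ are not all zero yet sum to zero, at least one is strictly positive; choosing $t^\ast = \min\{ \lambda_i / \mu_i : \mu_i > 0 \}$ keeps all weights nonnegative and forces the minimizing index to vanish, so $p$ is expressed as a convex combination of at most $m-1$ points of $V$.

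The argument then closes by iteration: each pass strictly decreases the number of points in use, so after finitely many steps we reach a representation of $p$ with at most $n+1$ points, and the associated subset $A \subseteq V$ satisfies $|A| \leq n+1$ and $p \in Conv\, A$. I do not anticipate a serious obstacle; the one step that warrants care is verifying that $t^\ast$ simultaneously preserves nonnegativity of every weight and annihilates one of them, which relies precisely on the sign structure enforced by $\sum_{i=1}^m \mu_i = 0$ (guaranteeing that some $\mu_i > 0$ so that the minimum is well defined and the reduction is genuine).
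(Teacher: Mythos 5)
Your proof is correct and complete: it is the classical dimension-reduction argument for Carath\'eodory's theorem (extract an affine dependence $\sum_i \mu_i v_i = 0$, $\sum_i \mu_i = 0$ among more than $n+1$ points, slide the weights by $t^\ast = \min\{\lambda_i/\mu_i : \mu_i > 0\}$ to annihilate one coefficient while preserving nonnegativity, and iterate). Note that the paper itself gives no proof of this statement --- it is quoted as a well-known theorem with a citation and used as a black box to prune the support of the distribution $M[a,b]$ to at most three allocations --- so there is no competing argument in the paper to compare against; your write-up supplies the standard self-contained justification, and the one step needing care (that $t^\ast$ exists and keeps all weights nonnegative, which follows from $\sum_i \mu_i = 0$ forcing some $\mu_i > 0$) is handled correctly.
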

It must therefore be that our point $(w^A(a,b), w^B(a,b))$ lies in the convex hull of at most three of the points $(f_A(S_1,T_1), \\f_B(S_1,T_1)), \dotsc, (f_A(S_6,T_6), f_B(S_6,T_6))$.   It follows that there exists a distribution with a support that consists of three of the six permutations corresponding to $(a,b)$. Finding this distribution can be done in constant time by considering $\binom{6}{3}$ sets of three allocations.\footnote{Note that all quantities in this geometric problem are rational numbers, which are constructed via the sequence of operations described in the proof above and therefore have polynomial bit complexity. We can therefore solve the convex hull tasks described in this operation in polynomial time.}
We can therefore construct $M[a,b]$ as a distribution over at most $3$ permutations, concluding the proof of Theorem~\ref{lem:monotone2}.
\end{proof} \qed

The proof of Theorem \ref{lem:monotone2} is constructive: it implies a recursive method for constructing
the table $M$ of distributions.  That is, the procedure \emph{ConstructDistributions} from Algorithm \ref{alg:mech2}
(with input $(a,b)$) will procede by filling table $M$ in increasing order of $t$, up to $a+b$, by choosing the value
of $\alpha$ for each table entry as in the proof of Theorem \ref{lem:monotone2}, then storing the implied
distribution over three permutations.  Note that we can explicitly store the allocations corresponding to the
permutations in the table.
We conclude, given this implementation of  \emph{ConstructDistributions}, that Algorithm \ref{alg:mech2}
is a polytime strategyproof $2$-approximation to the $2$-player influence maximization problem.

 
\section{A Strategyproof Mechanism for Three or More Players}
\label{sec:3player}
To construct a strategyproof mechanism for $k > 2$ players, 
we will impose additional restrictions on the influence functions $f_1, \dotsc, f_k$. 
These additional assumptions are satisfied by
many models of influence spread considered in the literature, as we discuss below.
We show that, under these assumptions, there is a natural mechanism that is strategyproof when there are at 
least three players.  In fact, it turns
out that having three or more players in such a setting allows for a
much simpler mechanism than the mechanism for the case of only two players\footnote{At this point, the reader may wonder if the two player case can be reduced to the case $k > 2$ by adding dummy agents with budget $0$.  This does not work because strategyproofness is defined over the space of \emph{all possible} agent bids, so 
we cannot restrict our attention only to profiles in which some players bid 0.  
Our examples in Appendix~\ref{sec.examples} show that this is not just a nuance of the proof but rather an intrinsic obstacle to using the uniform distribution.}.
\paragraph{Assumption 1: Mechanism Indifference}
We will assume that $f(\mb{S}) = f(\mb{S}')$ whenever the sets $\bigcup_i S_i$ and $\bigcup_i S_i'$
are equal 
\footnote{We again postpone discussion of multiset allocations
until Appendix~\ref{sec.models}}. That is, social welfare does not depend on the manner in which allocated items are partitioned
between the agents.  We will call this the \emph{Mechanism Indifference} (MeI) assumption.

\medskip

If assumption 1 holds, then we can imagine a greedy algorithm that chooses which items to add to
the set
$\bigcup_i S_i$ 
one at a time to greedily maximize the social welfare.  By assumption 1,
the welfare does not depend on how these items are divided among the players.  This
greedy algorithm generates a certain social welfare whenever the sum of budgets is $t$; 
write $w(t)$ for this welfare.  Note that $w(0), w(1), \dotsc$ is a 
concave non-decreasing 
sequence.



\paragraph{Assumption 2: Agent Indifference}
We will assume that $f_i(S_i, \mb{S}_{-i}) = f_i(S_i, \mb{S}'_{-i})$ whenever sets
$\bigcup_{j \neq i} S_j$ and $\bigcup_{j \neq i} S_j'$ are equal.  That is, each agent's utility depends on the
set of items allocated to the other players, but not on how the items are partitioned among those
players.  We will call this the \emph{Agent Indifference} (AgI)
assumption. Notice that in the two-players case, this assumption is
essentially vacuous.

\medskip


We note that the 
models for competitive influence spread
proposed by Carnes et al. \cite{Carnes:2007:MIC:1282100.1282167} 
and Bharathi et al. \cite{DBLP:conf/wine/2007/BharathiKS07} are based
on a cascade model of influence spread, and satisfy 
both the MeI and AgI assumptions. Similarly, if we restrict the
OR model in \cite{springerlink:10.1007/978-3-642-17572-5_48} so that the
underlying spread process is a cascade (and not a threshold) process 
and agents are anonymous (a restriction that will be defined in
Section~\ref{sec.disjoint}), as assumed in the Carnes et al models,
then this special case of the OR model also satisfies MeI and AgI.


\subsection{The uniform random greedy mechanism}
Consider Algorithm \ref{alg:rand.mech}, which we refer to as the 
uniform random greedy mechanism.  This mechanism proceeds by first greedily selecting which subset of the  ground set elements to allocate.  It then chooses an ordering of the players' bids uniformly at random from the set of all possible orderings, then assigns the selected elements to the players in this randomly chosen order. Note that this always results in a disjoint allocation. 
\begin{algorithm}
\KwIn{Ground set $U= \{e_1,\ldots,e_m\}$, budget profile $\mb{b}$}
\KwOut{An allocation profile $\mb{S}$}
\BlankLine
Initialize: $S_i \leftarrow \emptyset, i \leftarrow 0, j \leftarrow 0, I \leftarrow \emptyset, t \leftarrow \sum_i b_i$\;
\tcc{Choose elements to assign.}
\While{$i < t$}{ \label{alg:rand.mech:while.begin}
  $u_i \leftarrow argmax_{c \in U} \{f(I \cup \{c\}) - f(I)\}$ \;
  $I \leftarrow I \cup \{u_i\}$\ ; 
  $i \leftarrow i + 1$ \; 
} \label{alg:rand.mech:while.end}
\tcc{Partition elements of $I$.}
$\Gamma \leftarrow \{ \beta : [t] \to [k] \text{ s.t. } |\beta^{-1}(i)| = b_i \text{ for all } i \}$ \; 
Choose $\beta \in \Gamma$ uniformly at random \; 
\While{$j < t$}{
  $S_{\beta(j)} \leftarrow S_{\beta(j)} \cup \{u_j\}$ \; 
  $j \leftarrow j + 1$ \;
}
\caption{Uniform Random Greedy Mechanism}
\label{alg:rand.mech}
\end{algorithm} 
The MeI assumption implies that the random greedy mechanism obtains a constant factor approximation to the optimal social welfare.  We now claim that, under the MeI and AgI assumptions, this mechanism is strategyproof as long as there are at least $3$ players.

\begin{theorem}
\label{thm.random}
If there are $k \geq 3$ players and the AgI and MeI assumptions hold, then Algorithm
\ref{alg:rand.mech} is a strategyproof mechanism. Furthermore,
Algorithm~\ref{alg:rand.mech} approximates the social welfare to
within a factor of $\frac{e}{e-1}$ from the optimum.
\end{theorem}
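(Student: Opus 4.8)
The statement has two parts, and I would treat the approximation guarantee first since it is the easy one. By the MeI assumption $f(\mathbf{S})$ depends only on the union $\bigcup_i S_i$, so maximizing welfare over allocations with $|S_i|=b_i$ is exactly maximizing the monotone submodular $f$ over sets of size $t=\sum_i b_i$ (by monotonicity an optimal allocation may be taken disjoint, so its union has size $t$). The loop on lines~\ref{alg:rand.mech:while.begin}--\ref{alg:rand.mech:while.end} is precisely the standard greedy for this cardinality-constrained problem, which returns a set $I$ with $f(I)\ge (1-1/e)\max_{|I'|=t} f(I')$ \cite{Nemhauser1978}, and the subsequent random partition does not change $f(I)$ by MeI. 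This yields the claimed $\frac{e}{e-1}$-approximation.

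For strategyproofness it suffices to prove that each player's expected payoff is monotone nondecreasing in its own reported budget; since an overreport yields utility $-\infty$ under the type constraint, monotonicity rules out every profitable deviation. Fix the other budgets, write $t=b_i+\sum_{l\ne i}b_l$ and let $I_t=\{u_0,\dots,u_{t-1}\}$ be the greedy set, with $w(t)=f(I_t)$. Two structural facts set up the computation. First, the greedy is deterministic and prefix-consistent, so $I_{t-1}\subset I_t$ with $I_t=I_{t-1}\cup\{e\}$ for $e=u_{t-1}$. Second, under the uniform draw of $\beta$ the set $S_i=\beta^{-1}(i)$ is a uniformly random size-$b_i$ subset of $I_t$, and by AgI player $i$'s payoff depends only on $S_i$ given $I_t$, namely $f_i(S_i,I_t\setminus S_i)$.

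The heart of the proof is an exact formula for the expected payoff, and this is where $k\ge 3$ is essential. Using MeI (so $\sum_l f_l=f(I_t)$ for \emph{every} partition of $I_t$) together with the natural normalization that a player holding no elements has payoff $0$, I would first show that the standalone value $f_m(\{x\},I_t\setminus\{x\})$ of a single element does not depend on the holder $m$: assign $\{x\}$ to $m$ and all of $I_t\setminus\{x\}$ to a \emph{third} player $n$ (possible only when $k\ge 3$), compare with the same partition where $m$ is replaced by $m'$, and the conservation identity cancels to give $f_m(\{x\},I_t\setminus\{x\})=f_{m'}(\{x\},I_t\setminus\{x\})=:\phi(x)$. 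The same three-pile conservation yields an additivity law: for disjoint blocks $A,B$ with $A\cup B\subseteq I_t$, $f_i(A\cup B,\,I_t\setminus(A\cup B))=f_i(A,\ldots)+f_j(B,\ldots)$, the two terms being the values of holding $A$ and $B$ separately against everything else. Telescoping this law one element at a time and collapsing each peeled singleton to a holder-independent $\phi$ gives $f_i(S_i,I_t\setminus S_i)=\sum_{x\in S_i}\phi(x)$ together with $\sum_{x\in I_t}\phi(x)=w(t)$. Taking expectations over the uniform $S_i$, in which each $x$ lies in $S_i$ with probability $b_i/t$, produces the clean identity
\[ \mathbb{E}\big[f_i(S_i,I_t\setminus S_i)\big]\;=\;\frac{b_i}{t}\,w(t). \]

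It then remains to check that $g(b_i)=\frac{b_i}{t}w(t)$, with $t=b_i+\sum_{l\ne i}b_l$, is nondecreasing in $b_i$. Writing $B=\sum_{l\ne i}b_l$, the increment $g(b_i)-g(b_i-1)$ equals $[w(t)-w(t-1)]-B[\frac{w(t)}{t}-\frac{w(t-1)}{t-1}]$; the first bracket is $\ge 0$ since $w$ is nondecreasing, and the second is $\le 0$ because $w$ is concave with $w(0)=0$, so $w(t)/t$ is nonincreasing. Hence the mechanism is monotone and therefore strategyproof. I expect the decomposition in the previous paragraph to be the main obstacle: it is precisely the step that fails for $k=2$, where there is no spare third pile, the conservation law degenerates to $f_i+f_j=w(t)$, holder-independence and additivity are unavailable, and (consistent with the counterexamples of Section~\ref{sec.examples}) the uniform mechanism need not be monotone.
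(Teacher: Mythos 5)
Your proof is correct, and it reaches the paper's key identity $w^i(b,t)=\frac{b}{t}w(t)$ by a genuinely different route. The paper never proves anything pointwise: it works directly with the expected payoffs $w^i(b,t)$ (well-defined by AgI), and derives functional equations on them by comparing hypothetical bid profiles with the same total $t$ --- e.g.\ from the profile $(b_i=1,b_\ell=t-1)$ and the profile $(b_j=1,b_\ell=t-1)$, conservation (MeI) plus the fact that $w^\ell(t-1,t)$ is common to both gives $w^i(1,t)=w(t)-w^\ell(t-1,t)=w^j(1,t)$; a similar three-agent comparison gives $w^i(b,t)=w^i(b-1,t)+w^i(1,t)$, and induction finishes. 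You instead prove a stronger structural fact about the functions themselves: for any \emph{fixed} partition of $I_t$, holder-independence of singleton values $\phi(x)$ and the additivity law $f_i(A\cup B, I_t\setminus(A\cup B))=f_i(A,I_t\setminus A)+f_j(B,I_t\setminus B)$, so that $f_i(S_i,I_t\setminus S_i)=\sum_{x\in S_i}\phi(x)$ deterministically, and only then average over the uniform partition. Both arguments use $k\ge 3$ in exactly the same way (a spare third player to hold the complementary pile, so that the conservation identity cancels), and both correctly observe this is what breaks for $k=2$. What your version buys is a cleaner explanation of \emph{why} the formula holds --- under MeI, AgI, adverse competition and $k\ge3$, payoffs on any fixed ground set are forced to be additive with player-independent element values --- at the cost of a longer derivation; the paper's version is shorter because it manipulates only the expectations it ultimately needs. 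Two minor remarks: your appeal to concavity of $w$ (so that $w(t)/t$ is nonincreasing) is valid but unnecessary --- since $b\le t$ one has $\frac{b}{t}\le\frac{b+1}{t+1}$, and monotonicity of $w$ alone gives $\frac{b}{t}w(t)\le\frac{b+1}{t+1}w(t+1)$, which is all the paper uses; and your ``normalization'' that an empty-handed player gets payoff $0$ against \emph{any} opponent allocation should be stated as a consequence of adverse competition, nonnegativity, and $f_i(\emptyset,\ldots,\emptyset)=0$, which is how the paper justifies $w^i(0,t)\le w^i(0,0)=0$.
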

\begin{proof}
As before, notice that lines
\ref{alg:rand.mech:while.begin}--\ref{alg:rand.mech:while.end} are an
implementation of the standard greedy algorithm for maximizing a
non-decreasing, submodular set-function subject to a uniform matroid
constraint, as described in \cite{Company1978,Goundan2007}, and hence
gives the specified approximation ratio.

Next, we show that Algorithm~\ref{alg:rand.mech} is strategyproof. Fix
bid profile $\mb{b}$ and let $t = \sum_i b_i$.  Let $I$ be the union
of all allocations made by Algorithm \ref{alg:rand.mech} on bid
profile $\mb{b}$; note that $I$ depends only on $t$.  Furthermore,
each agent $i$ will be allocated a uniformly random subset of $I$ of
size $b_i$.  Thus, the expected utility of agent $i$ can be expressed
as a function of $b_i$ and $t$.  We can therefore write $w^i(b, t)$
for the expected utility of agent $i$ when $b_i = b$ and $\sum_{j} b_j
= t$ (recall that we let $w(t)$ denote the total social welfare when $\sum_i b_i = t$).

We now claim that $w^i(b,t) = \frac{b}{t} w(t)$ for all $i$ and all $0 \leq b \leq t$.  Note that this implies the desired result, since if our claim is true then for all $i$ and all $0 \leq b \leq t$ we will have
\[ w^i(b,t) = \frac{b}{t} w(t) \leq \frac{b+1}{t+1} w(t+1) = w^i(b+1,t+1) \]
which implies the required monotonicity condition.  

It now remains to prove the claim.
The adverse competition assumption implies that $w^i(0,t) \leq w^i(0,0) = 0$ for all $i$ and $t$.  We next show that $w^i(1,t) = w^j(1,t)$ for all $i$, $j$, and $t \geq 1$.  If $t = 1$ then this follows from the MeI assumption.  So take $t \geq 2$ and pick any three agents $i$, $j$, and $\ell$.  
Then, by the AgI assumption, we have
\[w^i(1,t) = w(t) - w^{\ell}(t-1,t) = w^j(1,t).\]

We next show that $w^i(b,t) = w^i(1,t) + w^i(b-1,t)$ for all $i$, all $b \geq 2$, and all $t \geq b$.  Pick any three agents $i$, $j$, and $\ell$, any $b \geq 2$, and any $t \geq b$.  Then, by the AgI assumption,
\begin{align*} 
w^i(b,t) & = w(t) - w^{\ell}(t-b,t) \\
& = w(t) - [w(t) - w^i(b-1,t) - w^j(1,t)] \\
& = w^i(b-1,t) + w^j(1,t) \\
& = w^i(b-1,t) + w^i(1,t).
\end{align*}

It then follows by simple induction that $w^i(b,t) = bw^i(1,t)$ for all $1 \leq b \leq t$.  But now note that $w(t) = w^i(1,t) + w^j(t-1,t) = tw^i(1,t)$, and hence $w^i(1,t) = \frac{1}{t}w(t)$ and therefore $w^i(b,t) = \frac{b}{t} w(t)$ for all $0 \leq b \leq t$, as required.  
\end{proof}

Note that the proof of Theorem \ref{thm.random} makes crucial use of
the fact that there are at least three players.  Indeed, in Appendix \ref{sec.examples.appendix} we give an example satisfing the MeI and AgI assumptions for which the random greedy algorithm is not strategyproof for two players.

 \section{Disjoint Allocations}
\label{sec.disjoint}

One feature of the mechanism from Section \ref{sec.main} for $2$ players is that it does not necessarily return a profile of disjoint allocations.  That is, the algorithm may allocate a given element to both agents.  Disjointness is a natural property to require in many models of influence in social networks.  Note that the mechanism from Section \ref{sec:3player} always returns disjoint allocations.  

We now show how to modify the mechanism from Section \ref{sec.main} to ensure disjoint allocations.
Recall that our general strategy in the non-disjoint case was to 
use the locally greedy algorithm
and construct a strategyproof-inducing distribution over player orderings for that algorithm.
Our strategy for achieving disjointness will be to modify the underlying greedy algorithm so that it only returns disjoint allocations, then apply the same techniques as in Section \ref{sec.main} to convert this algorithm into a strategyproof mechanism.  As noted in Section \ref{sec.main}, our method can be applied to \emph{any} myopic allocation with a social welfare guarantee that does not depend on the chosen order of players.  It therefore suffices to find such a myopic allocation method that guarantees disjointness.


When the disjointness constraint is combined with demand restrictions, 
the set of valid allocations is not a matroid but rather an intersection of two matroids.  The locally greedy algorithm described in Section \ref{sec.preliminaries} is therefore not guaranteed to obtain a constant approximation for every ordering of the players.  
%
For example, 
suppose the ground set $U$ consists of two items, $1$ and $2$. Suppose player $A$ has values $1$ and $1+\epsilon$ for items $1$
and $2$, respectively (where $\epsilon > 0$ is arbitrarily small), and player $B$ has values $1$ and $N$ for items $1$ and $2$, respectively (where $N > 1$ is arbitrarily large).  When the demands of the two players are $1$, 
the locally greedy algorithm might allocate to either player first, but if it allocates to player $A$ first then it obtains the unbounded approximation ratio
$\frac{N+1}{2+\epsilon}$.  

The above problem stems from the asymmetry in the valuations of the two players.  To address this issue, we introduce a notion of player anonymity
that captures those circumstances in which these problems do not occur.

\begin{definition}
We say agents are \emph{anonymous} if 
their valuations are symmetric; that is,
 
$f_i(S_1, \ldots, S_k) = f_{\pi(i)}(S_{\pi(1)}, \ldots, S_{\pi(k)})$ for
all permutations $\pi$ and all agents $ 1 \leq i \leq k$. 
\end{definition}

If players are anonymous then the social welfare satisfies 
$f(S_1, \ldots, S_k) = f(S_{\pi(1)}, \ldots, S_{\pi(k)})$ 
for all permutations $\pi$.
We note that the influence
models proposed by Carnes et al. \cite{Carnes:2007:MIC:1282100.1282167} 
and Bharathi et al. \cite{DBLP:conf/wine/2007/BharathiKS07} satisfy this condition.
At the end of this section we will discuss the relationship between the anonymity
condition and the Agent Indifference and Mechanism Indifference conditions
from Section \ref{sec:3player}.

What we now show is that when the players are anonymous, our
order-independent locally greedy algorithm from Section \ref{sec.preliminaries} obtains a strategyproof mechanism with a $(k+1)$-approximation to the optimal social welfare, if the given permutation over orderings of the player allocations is sampled from a truthfulness-inducing distribution over permutations (e.g. the distributions we have obtained in the case of two players). Hence, this method provides a transformation to the disjoint allocations case, if one were to obtain a distribution over permutations for the non-disjoint case.

%

Algorithm~\ref{alg:mech3} is a simple modification to Algorithm~\ref{alg:mech2}, in which we explicitly enforce disjointness of the allocations.

\begin{algorithm}[ht]
\KwIn{Ground set $U= \{e_1,\ldots,\ldots,e_n\}$, demands $a,b$ for players $1,\ldots,k$, a valid permutation $\pi: \{1,2,\ldots,t\} \rightarrow \{1,2,\ldots ,k\}$ where $t=\sum_{i=1}^k b_i$}
\KwOut{An allocation $I_1, \ldots, I_k \subseteq U$ for the $k$ players}
\BlankLine
\For{$j = 1 \dotsc k$}{$I_j = \emptyset$} 
\For{$i = 1 \dotsc b_1 + \ldots + b_k$}{
    $u \leftarrow argmax_{c \in U-\bigcup I_j} \{w(I_{\pi(i)} \cup \{c\}, I_{\mathbf{-i}}) - w(I_{\pi(i)}, I_{\mathbf{-i}})\}$ \;
    $I_{\pi(i)} \leftarrow I_{\pi(i)} \cup \{u\}$ \;
}
\caption{Disjoint Locally Greedy algorithm}
\label{alg:mech3}
\end{algorithm}


\begin{theorem}
\label{thm:disjoint:appendix}
For any permutation (of player allocation turns) 
$\pi: \{1,2,\ldots,t\} \rightarrow \{1,2,\ldots ,k\}$ where $t = \sum_{i=1}^k b_i$, Algorithm~\ref{alg:mech3} obtains $(k+1)$-approximation to the optimal social welfare obtainable for disjoint allocation for identical players $1,\ldots, k$.
\end{theorem}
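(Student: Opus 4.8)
The plan is to adapt the classical analysis of greedy submodular maximization to the intersection of two matroids. Recall from Section~\ref{sec.preliminaries} that an allocation is encoded as a subset of $U \times \{1,\dots,k\}$; the demand constraints form a partition matroid $M_1$ (capacity $b_i$ on the block $U\times\{i\}$), while disjointness forms a second partition matroid $M_2$ (capacity $1$ on each block $\{e\}\times\{1,\dots,k\}$). Algorithm~\ref{alg:mech3} is exactly the locally greedy algorithm for this pair of matroids, forced to follow the player order given by $\pi$. Writing $X_G$ for greedy's output and $X_O$ for the encoding of an optimal disjoint allocation $O=(O_1,\dots,O_k)$, monotonicity and submodularity of $f$ give the standard bound $f(O)\le f(X_G\cup X_O)\le f(G)+\sum_{(o,i)\in X_O\setminus X_G}\big(f(X_G\cup\{(o,i)\})-f(X_G)\big)$. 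It then suffices to bound the sum of these marginals by $k\,f(G)$.

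I would bound the marginal sum one player at a time, proving $\sum_{o\in O_i}\big(f(X_G\cup\{(o,i)\})-f(X_G)\big)\le f(G)$ for each fixed $i$ and summing the $k$ resulting inequalities. Consider the steps at which greedy serves player $i$. Within these steps the extracted marginals are non-increasing: any element chosen at a later $i$-step was already available, hence eligible, at every earlier $i$-step, so submodularity forces the gains to decrease. For an optimal element $o\in O_i$ that greedy never consumed, $o$ is available at every turn of player $i$, so its marginal into $X_G$ is at most the marginal greedy actually extracted at player $i$'s last turn; matching such elements injectively to distinct $i$-steps bounds their total contribution by player $i$'s share of $f(G)$.

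The hard part will be the optimal elements $o\in O_i$ that greedy instead consumed on behalf of a \emph{competitor} $j\neq i$. This situation arises only because of the disjointness constraint together with the forced ordering, and has no analogue in the single partition-matroid analysis of Algorithm~\ref{alg:mech2}; such an $o$ may be unavailable at every turn of player $i$, so its marginal cannot be charged to an $i$-step directly. The plan is to charge it to the greedy step that consumed $o$, for which the inequality we need is that the welfare gained by awarding $o$ to $i$ on top of $X_G$ is dominated by the welfare greedy gained when it awarded $o$ to $j$. This is precisely where the \emph{anonymity} hypothesis is essential: symmetry of the valuations across players is what lets us relate the marginal of $(o,i)$ to that of $(o,j)$, while the adverse-competition assumption controls the contribution of the remaining allocation. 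I expect this cross-player charging to be the delicate heart of the proof, and it is also the source of the dependence on $k$: unlike the clean two-for-one exchange available for a single matroid, here the charging for different players may reuse the same greedy steps, so the $k$ per-player inequalities combine only to $k\,f(G)$ rather than to a constant multiple of $f(G)$.

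Combining the bounds gives $\sum_{(o,i)\in X_O\setminus X_G}\big(f(X_G\cup\{(o,i)\})-f(X_G)\big)\le k\,f(G)$, whence $f(O)\le(k+1)f(G)$, the claimed $(k+1)$-approximation; specializing to $k=2$ recovers the $3$-approximation invoked for disjoint two-player allocations. The argument uses only that allocations are built myopically in the order prescribed by $\pi$, consistent with the earlier observation that the construction applies to any myopic allocation rule.
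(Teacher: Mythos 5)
Your high-level decomposition is the right one, and up to a point it mirrors the paper's proof: the elements of $O_i$ that greedy never consumed are exactly the paper's $O_i^0 \setminus I_i$, and your injective matching of those elements to player $i$'s own greedy turns is precisely the content of the paper's Lemma~\ref{lem.disjoint.1} (valid there because such elements remain available at every turn of player $i$), which yields the $2\,w(I_1,\ldots,I_k)$ part of the bound. The gap is in what you call the delicate heart, and it is a genuine one: your charging for an element $o \in O_i$ that greedy consumed on behalf of a competitor $j$ requires the pointwise inequality that the marginal of $(o,i)$ at the consumption step is at most the marginal of $(o,j)$ that greedy extracted there, and anonymity does not imply this. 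Anonymity is invariance under relabelling \emph{all} players simultaneously; at the asymmetric partial solution $\mathbf{S}^{(o,j)}$ the players hold different sets, so no relabelling relates these two marginals. The paper's own separating example (in the subsection of Section~\ref{sec.disjoint} on indifference conditions: anonymous but not MeI, with $f_1(x,\emptyset)=1$ for singletons, $f_1(\{a,b\},\emptyset)=2$, and $f_1(x,y)=f_2(x,y)=3/4$) kills the inequality: at partial solution $(\{a\},\emptyset)$, handing $b$ to the first player has marginal $1$ while handing it to the second player has marginal $1/2$, so greedy may consume an element cheaply for a ``poor'' player even though it is worth strictly more to the player who owns it in $\mathbf{O}$, and your per-player bound of $f(G)$ breaks. (Under MeI the two marginals would coincide, but MeI is not assumed in this theorem --- only anonymity is.)

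The paper avoids any per-step charging for the consumed elements. It sets $O_i^j = O_i \cap I_j$ for $j \neq i$ and $O_i^0 = O_i - \bigcup_{j \neq i} I_j$, and splits $w(\mathbf{O})$ by subadditivity into the welfare of $(O_1^0,\ldots,O_k^0)$ plus the welfares of $k-1$ cyclically shifted allocations $(O_1^{(1+i)\bmod k},\ldots,O_k^{(k+i)\bmod k})$. In each shifted allocation every player receives a subset of the greedy set $I_m$ of some \emph{distinct} player $m$ (the shift is a bijection), so anonymity --- applied globally as a relabelling of whole allocations --- together with monotonicity of $f$ bounds each shifted allocation's welfare by $w(I_1,\ldots,I_k)$. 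No comparison of marginals at asymmetric intermediate solutions is ever needed; this is where anonymity genuinely enters. Adding the exchange bound $2\,w(I)$ for the $O^0$ part (your case (a), the paper's Lemma~\ref{lem.disjoint.1}) to the $(k-1)\,w(I)$ from the shifts gives the claimed $(k+1)\,w(I)$. To complete your proof, keep case (a) as is and replace the case (b) charging with this global relabelling argument; as written, the cross-player marginal inequality your plan rests on is false.
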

\begin{proof}
Let $\mathbf{O}=(O_1,\ldots,O_k)$ be an optimal allocation. Let $(I_1,\ldots, I_k)$
be the allocation obtained by running Algorithm~\ref{alg:mech3} for some permutation $\pi$.
Partition $\mathbf{O}$ as follows. For each player $i$, set $O_i^j=O_i \cap I_j$ for all $j \neq i$, and let $O_i^0=O_i - \bigcup_{j \neq i} I_j$. By submodularity,
\begin{equation}
\label{eq.disjoint1}
	w(O_1,\ldots, O_k) \leq w(O_1^0, \ldots, O_k^0) + \sum_{i=1}^{k-1}w(O_1^{(1+i) mod k},\ldots,O_k^{(k+i) mod k}).
\end{equation}
Due to anonymity and the fact that $O_i^j \subseteq I_j$, for all $j \in [k]$ we get
\begin{equation}
\label{eq.disjoint2}
  \sum_{i=1}^{k-1}w(O_1^{(1+i)mod k},\ldots,O_k^{(k+i)mod k}) \leq (k-1) \cdot w(I_1,\ldots, I_k).
\end{equation}

Next, we can apply the analysis performed for the original locally greedy algorithm, so as to obtain the relation specified in the following lemma
\begin{lemma}
\label{lem.disjoint.1}
$w(O_1^0,\ldots, O_k^0) \leq 2 \cdot w(I_1,\ldots,I_k).$
\end{lemma}
For readability, we prove the lemma at the end of this section.
Now, combining relations (\ref{eq.disjoint1}) and (\ref{eq.disjoint2}) 
and applying Lemma \ref{lem.disjoint.1}, we get:
\begin{equation}
	w(O_1,\ldots, O_k) \leq 2 \cdot w(I_1,\ldots,I_k) + (k-1) w(I_1, \ldots, I_k) = (k+1) \cdot w(I_1,\ldots,I_k)
\end{equation}
\end{proof}
\begin{corollary}
For $k = 2$ players, there exists a 3-approximate strategyproof mechanism that 
achieves disjoint allocations.
\end{corollary}
\begin{proof}
Observe that this revised version of the locally greedy algorithm is
order-independent. That is, we obtain the same
(constant) bound on its approximation ratio for any player ordering.  In particular, this means that we can apply the mechanism described in Section~\ref{sec.main} for obtaining a strategyproof 
solution without significantly degrading the approximation ratio of the greedy algorithm.
\end{proof}

We note that for all $k$ there is a natural greedy algorithm for this problem
(with disjointness) that obtains a $3$-approximation for any $k$.  Namely, the greedy algorithm that chooses
both the player and the allocation that maximizes the marginal utility
on each iteration \cite{Nemhauser1978}).  However, this algorithm imposes a particular ordering on the allocations and therefore does not allow the degree of freedom required by our truthful mechanism construction.

\begin{proof}[Proof of Lemma~\ref{lem.disjoint.1}]
We now adapt the analysis of the locally greedy algorithm
(e.g. \cite{Goundan2007}) in order to prove the bound in Lemma~\ref{lem.disjoint.1}. We
begin by introducing some additional notation. First, for $i \in [k]$,
let $O'_i = O_i^0 \setminus I_i$. For an item $e \in I_i$ ($i \in
[k]$), $\mathbf{S}^e=(S_1^e,\ldots,S_k^e)$ denotes the partial
solution of the algorithm at the time of item $e$'s addition. For an
allocation $(A_1,\ldots,A_k) \subseteq U^k$ and an item $e \in U$,
define the marginal gain to be $\rho_e^i(A_1,\ldots,A_k)=w(A_1,\ldots,A_i \cup \{e\},\ldots,A_k)-w(A_1,\ldots,\ldots,A_k)$. Lastly, for $i \in [k]$, we let $e_i = \arg\min_{e \in I_i}\rho_e^i(\mathbf{S}^{e})$; i.e. the minimal marginal increase to social welfare, obtained by the algorithm, when adding an element to $I_i$. The following lemma follows from the fact that the social welfare is a non-decreasing, submodular function:
\begin{lemma}[\cite{Company1978}]
\begin{equation*}
w(O_1^0,\ldots,O_k) \leq w(I_1,\ldots,I_k) + \sum_{i=1}^{k}\sum_{e \in O'_i}\rho_e^i(I_1,\ldots,I_k)
\end{equation*}
\end{lemma}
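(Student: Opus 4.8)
The plan is to prove the bound using only monotonicity and submodularity of the welfare function $w$, with no reference to the greedy algorithm's actual choices (the quantities $e_i$ and $\mathbf{S}^e$ are not needed here; they are used only in the subsequent argument). Write $\mathbf{I} = (I_1, \ldots, I_k)$ for the returned allocation and recall $O'_i = O_i^0 \setminus I_i$. First I would apply monotonicity coordinate by coordinate: since $O_i^0 \subseteq O_i^0 \cup I_i$ for every $i$, growing one player's set at a time gives
\[ w(O_1^0, \ldots, O_k^0) \leq w(O_1^0 \cup I_1, \ldots, O_k^0 \cup I_k). \]
Note that $O_i^0 \cup I_i = I_i \cup O'_i$, so the right-hand side is the welfare of $\mathbf{I}$ after adjoining, for each player $i$, precisely the elements of $O'_i$.

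The second step is to write $w(I_1 \cup O'_1, \ldots, I_k \cup O'_k) - w(\mathbf{I})$ as a telescoping sum of single-element marginal gains. Enumerate the pairs in $\bigcup_i (O'_i \times \{i\})$ as $(x_1, i_1), \ldots, (x_m, i_m)$ in an arbitrary order, and set $\mathbf{J}^{(0)} = \mathbf{I}$, with $\mathbf{J}^{(\ell)}$ obtained from $\mathbf{J}^{(\ell-1)}$ by adding $x_\ell$ to player $i_\ell$'s set. Then $\mathbf{J}^{(m)} = (I_1 \cup O'_1, \ldots, I_k \cup O'_k)$, and telescoping yields
\[ w(\mathbf{J}^{(m)}) - w(\mathbf{J}^{(0)}) = \sum_{\ell=1}^m \rho_{x_\ell}^{i_\ell}(\mathbf{J}^{(\ell-1)}). \]

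The crux, and the only place submodularity enters, is bounding each term by $\rho_{x_\ell}^{i_\ell}(\mathbf{J}^{(\ell-1)}) \leq \rho_{x_\ell}^{i_\ell}(\mathbf{I})$. At stage $\ell-1$ every player's set contains its $\mathbf{I}$-set, so $\mathbf{J}^{(\ell-1)}$ dominates $\mathbf{I}$ coordinate-wise; moreover $x_\ell \notin I_{i_\ell}$ (by definition of $O'_{i_\ell}$) and $x_\ell$ has not yet been added, so the pair $(x_\ell, i_\ell)$ is genuinely new at stage $\ell$. The clean way to make the decreasing-marginals inequality rigorous is to pass to the single-ground-set formulation of the Preliminaries, viewing $w$ as the submodular set function $g$ on $U \times \{1, \ldots, k\}$: the allocations $\mathbf{I}$ and $\mathbf{J}^{(\ell-1)}$ correspond to nested subsets $I^{\mathrm{set}} \subseteq J^{(\ell-1),\mathrm{set}}$ not containing $(x_\ell, i_\ell)$, and the bound is exactly standard submodularity of $g$. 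Summing over $\ell$ and regrouping the pairs by player gives
\[ w(I_1 \cup O'_1, \ldots, I_k \cup O'_k) - w(\mathbf{I}) \leq \sum_{i=1}^k \sum_{e \in O'_i} \rho_e^i(\mathbf{I}), \]
and combining with the monotonicity step completes the proof.

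The main subtlety I would flag is precisely this submodularity step: the per-coordinate submodularity asserted in the Preliminaries controls only the effect of growing a single player's set with the others held fixed, whereas in the telescoping several players' sets have already grown before $(x_\ell, i_\ell)$ is adjoined. Reducing to the combined ground set $U \times \{1, \ldots, k\}$, where $\mathbf{I}$ and $\mathbf{J}^{(\ell-1)}$ are nested subsets and the decreasing-marginals inequality is literally submodularity of $g$, resolves this cleanly and is consistent with the partition-matroid framework already established in the Preliminaries and invoked elsewhere in the disjoint-allocation analysis.
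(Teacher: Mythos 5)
Your proof is correct and is essentially the standard argument behind this lemma: the paper itself gives no proof, delegating the statement to the citation \cite{Company1978}, and your chain of (i) coordinate-wise monotonicity to pass from $w(O_1^0,\ldots,O_k^0)$ to $w(I_1\cup O'_1,\ldots,I_k\cup O'_k)$, (ii) telescoping over single-element additions, and (iii) decreasing marginals to replace each $\rho_{x_\ell}^{i_\ell}(\mathbf{J}^{(\ell-1)})$ by $\rho_{x_\ell}^{i_\ell}(I_1,\ldots,I_k)$ is exactly the argument in that source. The subtlety you flag is also well taken: step (iii) genuinely requires submodularity of the combined set function $g$ on the ground set $U\times\{1,\ldots,k\}$ --- which the paper's partition-matroid formulation in Section~\ref{sec.preliminaries} asserts, and which the locally greedy guarantee needs anyway --- rather than only the per-coordinate diminishing-returns condition stated in the bullets there; note as well that the left-hand side of the lemma as printed contains a typo and should read $w(O_1^0,\ldots,O_k^0)$, consistent with your proof and with its subsequent use in the proof of Lemma~\ref{lem.disjoint.1}.
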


Now, by the greedy rule of Algorithm~\ref{alg:mech3}, we have:
\begin{equation}
  \label{eq.disjoint.2}
  \rho_{e_i}^i(S^e_i) \geq \rho_e^i(S^e_i), \text{for all $i \in [k]$, and for all $e \in U \setminus S_i^{e_i}$}
\end{equation}

Additionally, using the submodularity of $w(\cdot)$, and the fact that for all $i,j \in [k]$, $S^{e_i}_j \subseteq I_j$, we get:
\begin{align}
  \label{eq.disjoint.3}
  w(O_1^0,\ldots,O_k^0) &\leq w(I_1, \ldots, I_k) + \sum_{i=1}^{k}\sum_{e \in O'_i}\rho_e^i(I_1,\ldots,I_k) \nonumber \\
  &\leq w(I_1, \ldots, I_k) + \sum_{i=1}^{k}\sum_{e \in O'_i}\rho_e^i(\mathbf{S}^{e_i})
\end{align}
Now, using $(\ref{eq.disjoint.2})$, we further extend the above bound as follows
\begin{align}
  \label{eq.disjoint.4}
  w(O_1^0,\ldots,O_k^0) &\leq w(I_1, \ldots, I_k) + \sum_{i=1}^{k}\sum_{e \in O'_i}\rho_{e_i}^i(\mathbf{S^{e_i}}) \nonumber \\
  &= w(I_1, \ldots, I_k) + \sum_{i=1}b_i \cdot \rho_{e_i}^i(\mathbf{S^{e_i}}) \nonumber \\
  &\leq w(I_1,\ldots,I_k) +  w(I_1,\ldots,I_k) = 2 \cdot  w(I_1,\ldots,I_k)
\end{align}
where the last inequality follows from the definition of the elements $e_1,\ldots, e_k$.
\end{proof}

\subsection{Relation to Indifference Conditions}

In this section we explore the relationship between the anonymity condition required by Theorem \ref{thm:disjoint:appendix} and the mechanism and agent indifference conditions (MeI and AgI) used in Section \ref{sec:3player}. As we will show, these conditions are incomparable when there are only two players, but when there are three or more players the AgI and MeI conditions together are strictly stronger than the anonymity condition.  

Consider first the case of two players.  
To see that MeI does not imply anonymity, consider the following example with two objects $\{a,b\}$ and two players.  The functions $f_1$ and $f_2$ are given by $f_1(x,0) = f_2(0,x) = 2$ for any singleton $x$, $f_1(\{a,b\},0) = f_2(0,\{a_b\}) = 3$, and $f_1(x,y) = 1.6$, $f_2(x,y) = 1.4$ for $(x,y) = (a,b)$ or vice-versa.  One can verify that $f = f_1 + f_2$ is submodular and that adverse competition and mechanism indifference are satisfied, but it is not anonymous (since $f_1(x,y) \neq f_2(y,x)$ for singletons $x$ and $y$).

To see that anonymity does not imply MeI, consider the following example with two objects $\{a,b\}$ and two players.  We will have $f_1(x,0) = f_2(0,x) = 1$ for each singleton $x$, $f_1(\{a,b\},0) = f_2(0,\{a,b\}) = 2$, but $f_1(x,y) = f_2(x,y) = 3/4$ for $(x,y) = (a,b)$ or vice-versa.  This pair of functions exhibits adverse competition and its sum is submodular, but it does not satisfy MeI (since $f(a,b) \neq f(\{a,b\},0)$).

For $k \geq 3$ players, MeI and AgI together imply anonymity.

\begin{theorem}
If there are $k \geq 3$ agents and the AgI and MeI conditions hold, then the agents are anonymous.
\end{theorem}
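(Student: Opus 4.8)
The plan is to first use the Agent Indifference assumption to put the individual valuations into a common normal form, and then reduce anonymity to the statement that this normal form is the same for every agent. Concretely, by AgI we may write $f_i(\mathbf{S}) = g_i(S_i, R_i)$, where $R_i = \bigcup_{m \neq i} S_m$, since $f_i$ depends on the other players' allocations only through their union. With this notation, applying the anonymity relation to the transposition of players $i$ and $j$ reduces it to $g_i(S_i,R_i) = g_j(S_i,R_i)$ (the right-hand side $f_{j}$ of the permuted allocation becomes $g_j(S_i,R_i)$ after the swap); since transpositions generate the symmetric group, anonymity is equivalent to the single assertion $g_i = g_j$ for all agents $i,j$. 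So it suffices to fix $i \neq j$ and arbitrary sets $X,Y \subseteq U$ and prove $g_i(X,Y) = g_j(X,Y)$.

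The key auxiliary fact I would establish first is that an agent with an empty allocation always has value zero: $g_i(\emptyset, Y) = 0$ for every $i$ and every $Y$. This follows from the standing normalization that an empty campaign generates no spread, $f_i(\emptyset,\dots,\emptyset)=0$ (as used in the proof of Theorem~\ref{thm.random}), together with the adverse competition assumption and nonnegativity: starting from the all-empty allocation and adding the elements of $Y$ one at a time to some fixed other player only decreases $f_i$, so $g_i(\emptyset,Y) \le f_i(\emptyset,\dots,\emptyset) = 0$, while $f_i \ge 0$ forces equality. By AgI this value does not depend on how $Y$ is distributed among the other players, so the statement is well defined.

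Now I would exploit the presence of a third player. Since $k \ge 3$, pick some $\ell \notin \{i,j\}$ and consider the allocation $\mathbf{S}$ with $S_i = X$, $S_\ell = Y$ and all other coordinates empty, and the allocation $\mathbf{T}$ with $T_j = X$, $T_\ell = Y$ and all other coordinates empty. Both have total union $X \cup Y$, so MeI gives $f(\mathbf{S}) = f(\mathbf{T})$. Evaluating the two sides agent by agent: the quantities of interest are $f_i(\mathbf{S}) = g_i(X,Y)$ and $f_j(\mathbf{T}) = g_j(X,Y)$; the carrier satisfies $f_\ell(\mathbf{S}) = g_\ell(Y,X) = f_\ell(\mathbf{T})$ by AgI (its own set is $Y$ and the others' union is $X$ in both allocations); every remaining agent is empty in both and contributes the same value (in fact $0$, by the empty-owner fact); and crucially the two stray empty owners contribute $f_j(\mathbf{S}) = g_j(\emptyset, X\cup Y) = 0$ and $f_i(\mathbf{T}) = g_i(\emptyset,X\cup Y) = 0$. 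Cancelling all the equal terms from $f(\mathbf{S}) = f(\mathbf{T})$ leaves exactly $g_i(X,Y) = g_j(X,Y)$, as required.

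The main subtlety, and the place where three players is essential, is that MeI only ever equates \emph{sums} of valuations, so a naive swap of $X$ between players $i$ and $j$ yields merely $f_i(\mathbf{S}) + f_j(\mathbf{S}) = f_i(\mathbf{T}) + f_j(\mathbf{T})$ rather than the individual equality we want. The role of the third player $\ell$ is to hold $Y$ in both scenarios, which forces player $j$ to be empty when we realize $g_i(X,Y)$ and player $i$ to be empty when we realize $g_j(X,Y)$; the empty-owner fact then annihilates precisely those two stray terms, isolating $g_i(X,Y)$ and $g_j(X,Y)$. This maneuver is impossible when $k=2$, where $Y$ must sit on the single opponent (so neither stray term vanishes), which is consistent with the earlier examples showing that MeI and AgI do not imply anonymity for two players. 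A minor point I would address carefully in the final write-up is the reduction from general permutations to transpositions in the anonymity definition, together with the fact that all the intermediate quantities are well defined thanks to AgI.
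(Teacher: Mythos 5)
Your proof is correct and is essentially the paper's own argument: your key step --- placing $X$ on player $i$ versus on player $j$ while a third player $\ell$ carries $Y$, then combining MeI (equality of the sums), AgI (cancellation of $\ell$'s term), and the fact that empty players have value zero --- is exactly the paper's central identity $f_1(S,\emptyset,U) = f_2(\emptyset,S,U)$ with $i=1$, $j=2$, $\ell=3$. The only differences are presentational: by frontloading the AgI normal form $g_i$ you absorb the paper's final extension step $f_1(S,T,U) = f_1(S,\emptyset,U \cup T) = f_2(\emptyset,S,U \cup T) = f_2(T,S,U)$ into the reduction itself, and you spell out the justification of the empty-player fact (via adverse competition and nonnegativity) where the paper simply invokes ``our assumed normalization.''
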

\begin{proof}
We will assume $k=3$ for notational convenience; extending to $k>3$ is straightforward.  It is sufficient to show that $f_1(S,T,U) = f_2(T,S,U)$ for arbitrary sets $S,T,U$; symmetry with respect to all other permutations then follows by composing transpositions.

We first show that $f_1(S,\emptyset,U) = f_2(\emptyset,S,U)$ for all sets $S$ and $U$.  By MeI, $f(S,\emptyset,U) = f(\emptyset,S,U)$.  By AgI, $f_3(S,\emptyset,U) = f_3(\emptyset,S,U)$.  By our assumed normalization, $f_2(S,\emptyset,U) = f_1(\emptyset,S,U) = 0$.  Taking sums, we conclude that $f_1(S,\emptyset,U) = f_2(\emptyset,S,U)$.

By the same argument, $f_1(S,\emptyset,U \cup T) = f_2(\emptyset,S,U \cup T)$.  But then, by AgI, $f_1(S,T,U) = f_1(S,\emptyset,U+T) = f_2(\emptyset,S,U+T) = f_2(T,S,U)$, as required.
\end{proof}
Finally, we show that the MeI and AgI assumptions together are strictly stronger than anonymity for $k \geq 3$ players, as anonymity does not imply MeI.  Consider the following example with 3 objects $\{a_1, a_2, a_3\}$ and 3 players.  For any labeling of the singletons as $x$, $y$, $z$, define $f_1(x, y, z) = 7/24$, $f_1(\{x,y\},z,0) = f_1(\{x,y\},0,z) = 3/4$, $f_1(x, \{y,z\},0) = f_1(x,0,\{y,z\}) = 1/4$, and $f_1(\{x,y,z\},0,0) = 1$.  Define $f_2$ and $f_3$ symmetrically, so agents are anonymous.  Adverse competition is satisfied and the sum of these functions is submodular, but neither MeI nor AgI are satisfied.

 \section{Conclusions}
We have presented a general framework for mechanisms that allocate items given an underlying submodular process. Although we have explicitly referred to spread processes over social networks, we only require oracle access to the outcome values, and thus our methods apply to any similar settings which uphold the properties we have required from the processes. We build on natural greedy algorithms to construct efficient strategyproof mechanisms that guarantee constant approximations to the social welfare. 

An important question is how to extend our results to the more general case of 
$k > 2$ agents without the MeI and AgI assumptions. 
As discussed in Appendix~\ref{sec.3player}, it seems that a fundamentally new approach would be required to obtain an $O(1)$-approximate strategyproof mechanism for $k > 2$ players. 
Another natural and challenging extension would be
to assume that nodes have costs for being initially allocated and then
replace the cardinality constraint on each agent by a knapsack constraint.
To do so, the most direct approach would be to try to utilize the known
approximation for maximizing a non decreasing submodular function
subject to one \cite{S04} or multiple \cite{KST09} knapsack constraints.
These methods do not seem to readily lend themselves to the approach we
have been able to exploit in the case of cardinality constraints. We have 
also assumed a ``demand satisfaction'' condition. Without this condition, it is trivial to achieve a strategyproof $k$ approximation by allocating
all initial elements to the agent who can achieve the most utlility. Can we improve upon this trivial approximation factor without imposing the MeI and AgI 
assumptions. 


An important question is how to extend our results to the more general case of $k$ players.  As we discuss in 
Appendix \ref{sec.3player},
it seems that a fundamentally new approach would be required to obtain a polytime strategyproof mechanism for $k > 2$ players.

 
\paragraph{Acknowledgements} We would like to thank Yuval Filmus for many helpful comments and discussions. We are also indebted to Eyal Shani for pointing
out a significant problem in \cite{BorodinBLO13} (and \cite{BBLO12})  
of the proof that $L(a,b) \neq \emptyset$ 
(Lemma \ref{lem:non-empty_inter} in this paper).

{
\bibliographystyle{splncs}
\bibliography{sigproc}

\begin{thebibliography}{10}

\bibitem{G78}
Granovetter, M.:
\newblock Threshold models of collective behavior.
\newblock American Journal of Sociology (1978)  1420--1443

\bibitem{S78}
Schelling, T.:
\newblock Micromotives and macrobehavior.
\newblock Norton (1978)

\bibitem{BR87}
Brown, J.J., Reingen, P.H.:
\newblock Social ties and word of mouth referral behavior.
\newblock Journal of Consumer Research \textbf{14} (1987)  350--362

\bibitem{GLM01}
Goldenberg, J., Libai, B., Mulle, E.:
\newblock Talk of the network: A complex systems look at the underlying process
  of word-of-mouth.
\newblock Marketing Letters (2001)  221--223

\bibitem{CM07}
Centola, D., Macy, M.:
\newblock Complex contagions and the weakness of long ties.
\newblock American Journal of Sociology \textbf{113} (2007)  702--734

\bibitem{Kempe:2003}
Kempe, D., Kleinberg, J., Tardos, E.:
\newblock Maximizing the spread of influence through a social network.
\newblock In: KDD '03: Proceedings of the ninth ACM SIGKDD international
  conference on Knowledge discovery and data mining, New York, NY, USA, ACM
  (2003)  137--146

\bibitem{Kempe:2005}
Kempe, D., Kleinberg, J., Tardos, {\'E}.:
\newblock Influential nodes in a diffusion model for social networks.
\newblock In Caires, L., Italiano, G., Monteiro, L., Palamidessi, C., Yung, M.,
  eds.: Automata, Languages and Programming. Volume 3580 of Lecture Notes in
  Computer Science.
\newblock Springer Berlin / Heidelberg (2005)  1127--1138

\bibitem{DBLP:journals/siamcomp/MosselR10}
Mossel, E., Roch, S.:
\newblock Submodularity of influence in social networks: From local to global.
\newblock SIAM J. Comput. \textbf{39} (2010)  2176--2188

\bibitem{Nemhauser1978}
Nemhauser, G.L., Wolsey, L.A., Fisher, M.L.:
\newblock {An analysis of approximations for maximizing submodular set
  functions---II}.
\newblock Mathematical Programming \textbf{14} (1978)  265--294

\bibitem{Carnes:2007:MIC:1282100.1282167}
Carnes, T., Nagarajan, C., Wild, S.M., van Zuylen, A.:
\newblock Maximizing influence in a competitive social network: a follower's
  perspective.
\newblock In: Proceedings of the ninth international conference on Electronic
  commerce. ICEC '07, New York, NY, USA, ACM (2007)  351--360

\bibitem{DBLP:conf/wine/2007/BharathiKS07}
Bharathi, S., Kempe, D., Salek, M.:
\newblock Competitive influence maximization in social networks.
\newblock In Deng, X., Graham, F., eds.: Internet and Network Economics. Volume
  4858 of Lecture Notes in Computer Science.
\newblock Springer Berlin / Heidelberg (2007)  306--311

\bibitem{springerlink:10.1007/978-3-642-17572-5_48}
Borodin, A., Filmus, Y., Oren, J.:
\newblock Threshold models for competitive influence in social networks.
\newblock In Saberi, A., ed.: Internet and Network Economics. Volume 6484 of
  Lecture Notes in Computer Science.
\newblock Springer Berlin / Heidelberg (2010)  539--550

\bibitem{GoyalK12}
Goyal, S., Kearns, M.:
\newblock Competitive contagion in networks.
\newblock In: STOC. (2012)  759--774

\bibitem{FMMM09}
Feldman, J., Mehta, A., Mirrokni, V., Muthukrishnan, S.:
\newblock Online stochastic matching: Beating 1-1/e.
\newblock FOCS \textbf{0} (2009)  117--126

\bibitem{FHKMS10}
Feldman, J., Henzinger, M., Korula, N., Mirrokni, V.S., Stein, C.:
\newblock Online stochastic packing applied to display ad allocation.
\newblock In: Proceedings of the 18th annual European conference on Algorithms:
  Part I. ESA'10, Berlin, Heidelberg, Springer-Verlag (2010)  182--194

\bibitem{Goundan2007}
Goundan, P., Schulz, A.:
\newblock {Revisiting the greedy approach to submodular set function
  maximization}.
\newblock Optimization online (2007)  1--25

\bibitem{KM2011}
Apt, K., Markakis, E.:
\newblock Diffusion in social networks with competing products.
\newblock In Persiano, G., ed.: Algorithmic Game Theory. Volume 6982 of Lecture
  Notes in Computer Science.
\newblock Springer Berlin / Heidelberg (2011)  212--223

\bibitem{DBLP:conf/wine/DubeyGM06}
Dubey, P., Garg, R., De~Meyer, B.:
\newblock Competing for customers in a social network: The quasi-linear case.
\newblock In Spirakis, P., Mavronicolas, M., Kontogiannis, S., eds.: Internet
  and Network Economics. Volume 4286 of Lecture Notes in Computer Science.
\newblock Springer Berlin / Heidelberg (2006)  162--173

\bibitem{KostkaOW08}
Kostka, J., Oswald, Y.A., Wattenhofer, R.:
\newblock Word of mouth: Rumor dissemination in social networks.
\newblock In: Proceedings of the 15th international colloquium on Structural
  Information and Communication Complexity. SIROCCO '08, Berlin, Heidelberg,
  Springer-Verlag (2008)  185--196

\bibitem{BBLO12}
Borodin, A., Braverman, M., Lucier, B., Oren, J.:
\newblock Truthful mechanisms for competing submodular processes.
\newblock CoRR \textbf{abs/1202.2097} (2012)

\bibitem{Singer12}
Singer, Y.:
\newblock How to win friends and influence people, truthfully: influence
  maximization mechanisms for social networks.
\newblock In: WSDM. (2012)  733--742

\bibitem{Company1978}
Nemhauser, G.L., Wolsey, L.A., Fisher, M.L.:
\newblock An analysis of approximations for maximizing submodular set
  functions---i.
\newblock Mathematical Programming \textbf{14} (1978)  265--294
  10.1007/BF01588971.

\bibitem{citeulike:340551}
Rockafellar, R.T.:
\newblock {Convex Analysis (Princeton Landmarks in Mathematics and Physics)}.
\newblock {Princeton University Press} (1996)

\bibitem{S04}
Sviridenko, M.:
\newblock A note on maximizing a submodular set function subject to a knapsack
  constraint.
\newblock Operations Research Letters \textbf{32} (2004)  41 -- 43

\bibitem{KST09}
Kulik, A., Shachnai, H., Tamir, T.:
\newblock Maximizing submodular set functions subject to multiple linear
  constraints.
\newblock In: Proceedings of the twentieth Annual ACM-SIAM Symposium on
  Discrete Algorithms. SODA '09, Philadelphia, PA, USA, Society for Industrial
  and Applied Mathematics (2009)  545--554

\bibitem{BorodinBLO13}
Borodin, A., Braverman, M., Lucier, B., Oren, J.:
\newblock Strategyproof mechanisms for competitive influence in networks.
\newblock In: WWW. (2013)  141--150

\bibitem{LSWZ10}
Lu, P., Sun, X., Wang, Y., Zhu, Z.A.:
\newblock Asymptotically optimal strategy-proof mechanisms for two-facility
  games.
\newblock In: Proceedings of the 11th ACM conference on Electronic commerce. EC
  '10, New York, NY, USA, ACM (2010)  315--324

\bibitem{AFKP10}
Ashlagi, I., Fischer, F., Kash, I., Procaccia, A.D.:
\newblock Mix and match.
\newblock In: Proceedings of the 11th ACM conference on Electronic commerce. EC
  '10, New York, NY, USA, ACM (2010)  305--314

\end{thebibliography}
}
\appendix 
\section{Relation with Other Diffusion Models} 
\label{sec.models}
In our results, we have made a number of modelling assumptions about agent utilities and social welfare. To some extent, we can argue that these assumptions may be 
necessary to be able to obtain truthfulness and constant approximation
on the social welfare.  Furthermore, we now provide some background on the relevance of our assumptions to the existing work on influence diffusion in social networks, which served as the running example throughout the paper.

\paragraph{Non-decreasing and submodular utilities and social welfare} To 
the best of our knowledge, in order to establish a constant approximation on the social
welfare, all of the known models in competitive and non-competitive diffusion assume that the overall expected spread is a non-decreasing and submodular function with respect to the set of initial adopters. A main part of the seminal work by Kempe et al. (\cite{Kempe:2003}) is the proof that the expected spread of two models of non-competitive diffusion process is indeed non-decreasing and submodular. This was later extended to more general processes in \cite{Kempe:2005}. In the case of the competitive influence spread models in \cite{DBLP:conf/wine/2007/BharathiKS07}, \cite{Carnes:2007:MIC:1282100.1282167}, and \cite{springerlink:10.1007/978-3-642-17572-5_48}, it is shown that a player's expected spread is a non-decreasing and submodular function of his initial set of nodes, while fixing the competitors allocations of nodes. This also implies that the total influence spread is a non-decreasing and submodular. Without any
assumption on the nature of the social welfare function, it is NP hard to
obtain any non trivial approximation on the social welfare even for a single
player. 

\paragraph{Adverse competition}
In the initial adoption of (say) a technology, a 
competitor can indirectly benefit from competition so as to insure
widespread adoption of the technology. However, once a technology is
established (e.g., cell phone usage), the issue of influence spread 
amongst competitors should satisfy adverse competition. The same can
be said for selecting a candidate in a political election. We also note
that the previous competitive spread models (\cite{DBLP:conf/wine/2007/BharathiKS07}, \cite{Carnes:2007:MIC:1282100.1282167}, and \cite{springerlink:10.1007/978-3-642-17572-5_48}) mentioned above also satisfy adverse competition. In its
generality, the Goyal and Kearns model need not satisfy this assumption, but 
in order to obtain their positive result on the price of anarchy, they
adopt a similar restriction (namely, that the adoption function at
every node satisfies the condition that a player's probability of
influencing an adjacent  node cannot decrease in the absense of other players 
competing).

Furthermore, a simple example shows that the assumption of adverse competition
 is necessary for truthfulness.
Consider the following two-player setting.
The ground set is composed of two items: $u_1$, which contributes a value of $1$ to the receiving player and a value of $N$ to her competitor (who did not receive $u_1$), and item $u_2$ which gives both players a value of $1$.
\comment{Each of the items $u_1,u_2$ and $u_3$ contibutes $N+1$ to the social welfare. Whenever one
of these items is allocated
to a player, it will contribute a value of $1$ to the player to whom it
was allocated,
and will contribute $N$ to its competitor's valuation (thus, the
competitor will get a ``free ride'' as a result of his competitor's
allocation). However, if the competitor's allocation is empty, the entire worth of the item, $N+1$, will
be transfered to the receiving player.}
Now, consider the outcome of any mechanism when the bid profile is
$(1,1)$. Without loss of generality, one player, say player $A$,
will receive $u_1$, while the other player will get $u_2$.
The valuations would therefore be $2$ and $N+1$ for players
$A$ and $B$, respectively. In that case, player $A$ would prefer to
lower her bid to $0$, which would guarantee her a valuation of $N$ (player $B$ would have to get $u_1$, as otherwise the approximation ratio of the social welfare is unbounded as $N$ grows). We conclude that unless the competition assumption holds, no strategyproof mechanism can, in general, obtain a bounded approximation ratio to the optimal social welfare.
%
Although the example refers to deterministic
allocations, the same argument can be made for randomized allocations.

\paragraph{Mechanism and agent indifference} In both the Wave Propagation model and the Distance-Based model presented in \cite{Carnes:2007:MIC:1282100.1282167}, the propagation of influence upholds both the mechanism and agent indifference properties. In \cite{GoyalK12}, it is assumed that the probability that a node will adopt some technology is a function of the fraction of influenced neighbours (regardless of their assumed technology). This immediately implies mechanism indifference, as general spread is invariant with respect to the distribution of technologies among initial nodes. For their positive price of anarchy results about more than two players, it
is assumed that the selection function is linear which would imply
mechanism indifference. 

\paragraph{Anonymity} With the excption of the OR model (\cite{springerlink:10.1007/978-3-642-17572-5_48}), the above mentioned models also satisfy an
anonymity assumption that will be needed to modify the local greedy algorithm 
(as in Algorithms \ref{alg:mech2} and \ref{alg:rand.mech}) 
so as to insure that the 
initial allocation is disjoint (see Appendix \ref{sec.disjoint}). 
Anonymity basically means that the players are symmetric and when there
are more than two players 
this is a somewhat weaker condition than having both mechanism and
agent indifference as we shall show in Appendix \ref{sec.disjoint}. We note that in \cite{DBLP:conf/wine/2007/BharathiKS07} and \cite{Carnes:2007:MIC:1282100.1282167} there is only one edge-weight per edge \footnote{In fact, towards the end of the paper, the authors of \cite{Carnes:2007:MIC:1282100.1282167} conjecture that their results extend to the non-anonymous case where each edge has technology-specific weights. This conjecture was later shown to be false in \cite{springerlink:10.1007/978-3-642-17572-5_48}.} thereby enforcing 
anonymity. In \cite{GoyalK12}, it is explicitly 
stated that the selection function is symmetric across the players and this 
implies anonymity. 

\paragraph{Multiset Allocations and Disjointness}
Our model assumes that each agent can be allocated a node at most once, and indeed most influence models assume that a node is allocated at most once in any initial allocation.  However, we can extend our model to allow a node to be allocated multiple times to the same agent, as in (for example) the model of Goyal and Kearns \cite{GoyalK12}.  To implement such an extension, we can simply consider a modified problem instance with many identical copies of a given node, treating each copy as a distinct element, and then proceed as though each element can be allocated at most once. The output of Algorithm~\ref{alg:mech2} for two players and Algorithm~\ref{alg:rand.mech} for more than two players would then be a profile of multisets with regard to the original network model.  We note, however, that for the case of Algorithm~\ref{alg:rand.mech}, the MeI and AgI definitions effectively imply that if multisets are permitted, then non-disjoint allocations must be permitted as well, as the conditions cannot distinguish between an element being allocated to one agent twice or to two different agents.

\paragraph{Generality of the Model}
A few words are in order about the generality of the model of diffusion under which we
prove that Algorithm \ref{alg:mech2} is strategyproof and provides a 2-approximation. As
noted, with the exception of the OR model, the analysis in previous competitive influence
models assumes anonymous agents. Our general model does not require anonymity and hence we
can accommodate agent specific edge weights (e.g. in determining the probability that
influence is spread across an edge, or for determining whether the weighted sum of
influenced neighbors crosses a given threshold of adoption).  Our model also notably allows
agent-independent node weights, for determining the value of an influenced node.
Moreover, our abstract model does not specify any particular influence spread process, so
long as the social welfare function 
is monotone submodular and each player's payoff is monotonically non-decreasing in his own
set and non-increasing in the allocations to other players.   In particular, our framework
can be used to model probabilistic cascades as well as submodular threshold models.

 \section{Counter examples when there are two agents (extended discussion)}
\label{sec.examples.appendix}
The locally greedy algorithm
is defined over an 
\emph{arbitrary} permutation of the allocation turns. At the core of
our work, we seek to carefully construct such orderings in a manner
that induces strategyproofness.
We demonstrate that this algorithm due to Nemhauser et al 
\cite{Nemhauser1978} (see also 
Goundan and Schultz \cite{Goundan2007})
is not, in general, strategyproof for some natural methods for choosing 
the ordering of the allocation between two agents.  

To clarify the context when there are only two agents, we refer to them 
as agent $A$ and agent $B$ and their utilites as $f_A$ and $f_B$ respectively. 
We give examples of a set $U$ and
functions $f_A$ and $f_B$ (satisfying the conditions of our model) such that
natural greedy algorithms for choosing sets $S$ and $T$ result in
non-monotonicities. 
Our examples  will all easily extend 
to the case of $k > 2$ agents (but not satisfying agent indifference).

\subsection{The OR model}
We will consider examples of a special case of the OR model for influence spread, as studied in
\cite{springerlink:10.1007/978-3-642-17572-5_48}.
Let $G=(V,E)$ be a graph with fractional edge-weights $p: E
\rightarrow [0,1]$, vertex weights $w_v$ for each $v \in V$, and sets $I_A,I_B \subseteq V$
of ``initial adopters'' allocated to each player.
We use vertex weights for clarity in our examples; in Appendix \ref{sec.unweighted} we  
show how to modify the examples given in this section to be unweighted.
The process then unfolds in discrete steps. 
For each $u_A \in I_A$ and $v_A$ such that $(u_A,v_A)
\in E$, $u_A$, once infected, will have a single chance to ``infect'' $v_A$ with probability
$w(u_A,v_A)$. Define the same, single-step process for the nodes in
$I_B$, and let $O_A$ and $O_B$ be the nodes infected by nodes in $I_A$
and $I_B$, respectively. Note that the infection process defined for
each individual player is an instance of the Independent Cascade model
as studied by Kempe et al. \cite{Kempe:2003}. Finally, nodes that
are contained in $O_A \backslash O_B$ will be assigned to player $A$,
nodes in $O_B \backslash O_A$ will be assigned to $B$, and any
nodes in $O_A \cap O_B$ will be assigned to one player or the
other by flipping a fair coin. 

In our examples, we consider two identical players each having 
utility equal to the weight of
the final set of nodes assigned by the spread process. 
It can be easily verified that both the expected social welfare 
(total weight of influenced nodes) and the expected individual 
values (fixing the other player's allocation) 
are submodular set-functions. 

\subsection{Deterministic greedy algorithms that are not strategyproof}
We demonstrate that the more obvious deterministic
orderings for the greedy algorithm fail. First, consider the
``dictatorship'' ordering, in which (without loss of generality
by symmetry) player $A$ is first allocated
nodes according to his budget, and only then player $B$ is allocated nodes. 
Our example showing
non-truthfulness also applies to an ordering that would always 
select the player having the largest current unsatisfied budget 
breaking ties (again without loss of generality by symmetry) 
in favor of player A.
\comment{
We will consider a particular ordering in which the algorithm
first allocates player $A$ items according to its bid, and
subsequently allocates items to player $B$. Our example showing 
non-truthfulness also applies to an ordering that would always first
select the player having the largest current unsatisfied budget say
breaking ties (without loss of gnerality by symmetry) in favor of player A. 
Recall that this particular ordering will not affect the overall approximation ratio of 2 to the social welfare. In our example we will consider the following competitive spread process, which is a simplified version of the OR process, as studied by \cite{springerlink:10.1007/978-3-642-17572-5_48}.
}
Consider the graph depicted in Figure~\ref{fig:counter1}.
\begin{figure}
  \subfloat[The counter-example for the deterministic
  mechanism with a dictatorship ordering. The initial budget for both players is 1.]{
  $\xymatrix{
        *+[o][F-]{u_1 } & &*+[o][F-]{u_2 } & &*+[o][F-]{u_3 }  &*+[o][F-]{u_4 }\\
        \\
        &*+[o][F-]{c_1 } \ar[uul]_{1} \ar[uur]^{1 } & & &*+[o][F-]{c_2 } \ar[uu]_{\frac{1}{4}+\epsilon} \ar[uullll]_{\frac{9}{10}} \ar[uull]_{\frac{9}{10}}  &*+[o][F-]{c_3 } \ar[uu]_{\frac{1}{2}}
      }$
   \label{fig:counter1}
  }\hspace{\stretch{1}}
\subfloat[The counter-example for the deterministic algorithm under a
Round Robin ordering. The initial budgets for players $A$ and $B$ are
$1$ and $2$, respectively.]{
    $\xymatrix{
        *+[o][F-]{u_1 } & &*+[o][F-]{u_2 }  &*+[o][F-]{u_3 }  &*+[o][F-]{u_4 }\\
        \\
        *+[o][F-]{c_1 } \ar[uu]_{1} & &*+[o][F-]{c_2 }
        \ar[uull]_{1 - 2 \cdot \epsilon} \ar[uu]_{ \epsilon } &
          *+[o][F-]{c_3 } \ar[uu]_{1} & *+[o][F-]{c_4 }
          \ar[uu]_{3\cdot \epsilon}
      }$
 \label{fig:counter2}
  }
\caption{Counter-examples for the mechanism under the deterministic
  dictatorship and Round Robin orderings. In both case, we set the
  weights $w_{c_i}=\epsilon$ and $w_{u_i}=1$, for all $1 \leq i \leq 4$. $0 <\epsilon < \frac{1}{8}$}
\end{figure}
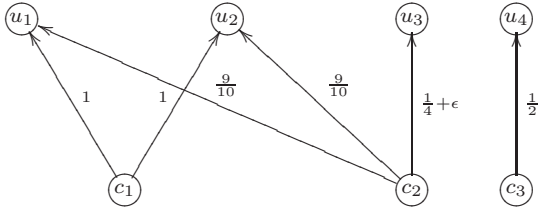
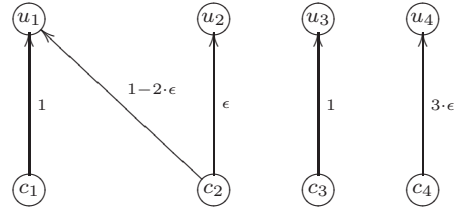
When player $A$ bids 1 and player $B$ bids 1 as well, the algorithm
will allocate $c_1$ to player $A$, as it contributes the maximal
marginal gain to the social welfare, and will allocate $c_3$ to player $B$. 
The value of the allocation for player $A$ is $2$.

However, notice that if player $A$ increases
its bid to 2, the mechanism will
allocate nodes $c_1$ and $c_4$ to player $A$, and allocate
$c_2$ to $B$.  In this case player $A$ receives an extra value of
$\frac{1}{2}$ from node $c_3$, but the allocation of $c_2$ to $B$ will 
``pollute'' player $A$'s value from $c_1$: he will receive 
nodes $u_1$ and $u_2$ each with probability $\frac{1}{10} + \frac{1}{2}\cdot\frac{9}{10} = \frac{11}{20}$.
Thus the total expected value for player $A$ is only $\frac{16}{10}$, and hence the algorithm is
non-monotone in the bid of player $A$.

\comment{
Again, we use the same model in which each
node that was allocated to either one of the players infects its
neignbours with some probability, according to the edge weights. Nodes
that were infected by both players' nodes are ultimately infected by
just one of the players with equal probability.
}

\comment{
In our example, there are items that 
improve the payoff of the receiving player solely by their ability to
potentially ``steal'' from the other player's outcome. If player $A$
were to increase her budget, it would encourage the mechanism to
assign player $B$ more items that steals $A$'s outcome. 

With this in
mind, 
}
Next, consider the Round Robin ordering, in which the mechanism
alternates between allocating a node to player $A$ and to player
$B$. Our example here also applies to the case when the mechanism
always chooses the player having the smallest current unsatisfied
budget breaking ties in favor of player A. 
Consider the instance given in Figure~\ref{fig:counter2}. When
the bids of players $A$ and $B$ are $1$ and $2$, respectively, the
algorithm will first allocate $c_1$ to player $A$, and then it will
subsequently allocate nodes $c_3$ and $c_4$ to player $B$, which
results in a payoff of $1$ for player $A$. If player
$A$ were to increases his bid to $2$, then the mechanism would
allocate nodes $c_1$ and $c_4$ to player $A$, and nodes $c_2$ and
$c_3$ to player $B$, for a payoff of $3 \cdot \epsilon + 2 \cdot
\epsilon + (1 - 2 \cdot \epsilon) \cdot \frac{1}{2} =
\frac{1}{2}+4 \cdot \epsilon < 1$ (since $0 < \epsilon <
\frac{1}{8}$). Therefore, the monotonicity is violated for the
payoff to player $A$.

\subsection{The uniform random greedy algorithm is not strategyproof}
\label{sec.counterexample.random}
As we shall see in
Section~\ref{sec:3player}, for the case of $k>2$ agents in the 
restricted setting that assumes mechanism and agent indifference, 
a very simple mechanism admits a strategyproof
mechanism that provides an
$\frac{e}{e-1}$ approximation to the optimal social welfare. More specifically, we show
that under these assumptions on the social welfare 
agent utilities, taking a uniformly random permutation over the
allocation turns is a strategyproof algorithm. In contrast, for the
case of $k=2$, and even with these additional restrictions (although the
agent indifference assumption turns out to be vacuous in
this case), the uniformly random mechanism is not strategyproof.

Consider the example given in
Figure~\ref{fig:counter3}. 
We note that for this example, Algorithm \ref{alg:rand.mech} 
in Section ~\ref{sec:3player} is equivalent 
to first choosing a random order of allocation (e.g. choosing all 
possible permutations satisfying agent demands with equal
probability) and then allocating greedily. 
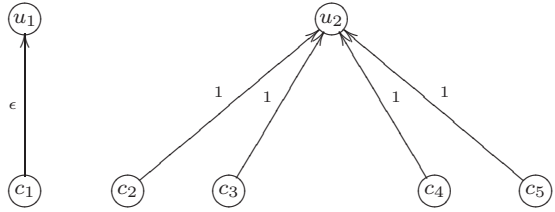
\begin{figure}
\begin{equation*}
\xymatrix{
*+[o][F-]{u_1 } & & &*+[o][F-]{u_2 }\\
\\
*+[o][F-]{c_1 } \ar[uu]^{\epsilon} &*+[o][F-]{c_2 } \ar[uurr]^{1}
&*+[o][F-]{c_3 } \ar[uur]^{1} & & *+[o][F-]{c_4 } \ar[uul]_{1} &*+[o][F-]{c_5 } \ar[uull]_{1}
}
\end{equation*}
\caption{The counterexample for the
  mechanism that allocated according to a random ordering of the turns
  ($0 <\epsilon \ll 1$). $w_{c_i}=\epsilon, i=1,\ldots,5$, $w_{u_i}=1, i=1,2$}
\label{fig:counter3}
\end{figure}
The greedy algorithm will allocate one of $c_2,c_3,c_4$
and $c_5$ to one of the players, then allocate $c_1$, and then any
remaining nodes. 

Let player $A$'s budget be $3$ and  player $B$'s budget be $1$. 
In this case, with
probability $\frac{1}{4}$, player $B$ will be allocated $c_1$ (i.e.\ when $B$'s allocation occurs second), 
in  
which case player $A$'s expected value would be
$1$. Also, with probability $\frac{3}{4}$, player $B$ will
be allocated one of $\{c_2, c_3, c_4, c_5\}$, in which case player $A$'s expected outcome
would be $\frac{1}{2} + \epsilon$. In total, player $A$'s expected payoff will be
$\frac{5}{8} + \frac{3}{4}\epsilon$.

If player $A$ were to increase his budget to $4$, then with
probability $\frac{1}{5}$ player $B$ will be allocated $c_1$, in which
case player $A$'s outcome will be $1$. On the other hand, player $A$'s
expected payoff will be $\frac{1}{2} + \epsilon$ if $B$ is allocated one
of $\{c_2, c_3, c_4, c_5\}$, which occurs with probability $\frac{4}{5}$. In total, player
$A$'s expected outcome will be $\frac{3}{5} + \frac{4}{5}\epsilon < \frac{5}{8} + \frac{3}{4}\epsilon$,
implying that this algorithm is non-monotone.

 \comment{
\subsection{The adverse competition assumption is necessary}
\label{sec.examples.adverse}
In the preliminaries section, we placed a restriction on the players' valuation
functions: adding an item to one player's set, cannot improve the
outcome of its competitor. Although the greedy approximation algorithm
does not, in general, require this property in order to guarantee a
constant approximation ratio for the social welfare, it is tempting to
consider what would happen when if one were to lift this assumption
about the valuations.

A simple example shows that the assumption of mutual degradation
effects is necessary. That is, a player cannot gain from her
competitor's exapnsion.Consider the following setting.
The ground set is composed of three identical items (we shall refer to
them as \emph{coins}). Each coin is worth $N+1$. Whenever a coin is allocated
to a user, it will contribute a value of $1$ to the user who was allocated
the coin, and will contribute $N$ to its competitor's valuation (thus, the
competitor will get a ``free ride'' as a result of his competitor's
allocation). However, if the competitor's allocation is empty, the entire worth of the coin, $N+1$, will
be transfered to the receiving player.

Now, consider the outcome of any mechanism when the bid profile is
$(2,2)$. Since there are only three coins, one user, say player $A$,
will receive just a single coin, while the other user will get two
coins. The valuations would therefore be $2N+1$ and $N+2$ for players
$A$ and $B$, respectively. In that case, player $B$ would prefer to
lower his bid to just 1, which would guarantee him a valuation of $2N+1$.

The example demonstrates a fundamental problem when the assumption
about the effects on the competitor's outcome does not hold. Namely,
there are settings in which {\it no constant approximation
  approximation algorithm which admits a strategyproof mechanism}.

In addition, although the discussion above refers to deterministic
allocations, the same arguments can be made for randomized allocations
as well. In any case, the example demonstrates a fundamental problem
in some cases where the restriction on the valuations is does not
hold. It implies that no approximation algorithm exists which admits a
strategyproof mechanism.
}
\section{Counterexamples with Unweighted Nodes}
\label{sec.unweighted}

In Section \ref{sec.examples} we constructed specific examples of influence spread instances for the OR model, to illustrate that simple greedy methods are not necessarily strategyproof for the case of two players.  These examples used weighted nodes which our model allows. For the sake of completeness, we now show that these examples can be extended to the case of unweighted nodes.

We focus on the example from Section \ref{sec.counterexample.random} to illustrate the idea; the other examples can be extended in a similar fashion.  In that example there were nodes $u_1$ and $u_2$ of weight $1$, and nodes $c_1, \dotsc, c_5$ of weight $0$.  We modify the example as follows.  We choose a sufficiently large integer $N > 1$ and a sufficiently small $\epsilon > 0$.  We will replace node $u_1$ with a set $S$ of $2/\epsilon$ independent nodes.  We replace the $\epsilon$-weighted edge from $c_1$ to $u_1$ with an $\epsilon$-weighted edge from $c_1$ to each node in $T$.

Similarly, we replace $u_2$ by a set $T$ of $N$ independent nodes.  For each $c_i$, we replace the unit-weight edge from $c_i$ to $u_2$ with a unit weight edge from $c_i$ to each node in $T$.

In this example, if the sum of agent budgets is at most $5$, the greedy algorithm will never allocate any nodes in $S$ or $T$.  The allocation and analysis then proceeds just as in Section \ref{sec.counterexample.random}, to demonstrate that if agent $B$ declares $1$ then agent $A$ would rather declare $3$ than $4$.


\section{Tightness of Approach: More than Two Players}
\label{sec.3player}

The mechanism we construct in Section \ref{sec.main} is applicable to settings in which there are precisely two competing players, and our mechanism in Section \ref{sec:3player} for more than three players requires the MeI and AgI assumptions.  A natural open question is whether these results can be extended to the general case of three or more agents without the MeI and AgI restrictions.  In this section we briefly describe the difficulty in applying our approach to settings with three players.

For the case of two players in Section \ref{sec.main}, our mechanism was built from an initial greedy algorithm by randomizing over orderings under which to assign elements to players.  Our construction is recursive: we demonstrated that if we can define the behaviour of a strategyproof mechanism for all possible budget declarations up to a total of at most $t$, then we can extend this to a strategyproof mechanism for all possible budget declarations that total at most $t+1$.  A key observation that makes this extension possible is the direct relation between the utilities of the two players. This manifests itself in the cross monotonicities that we utilize in the inductive argument. In addition, the strategyproofness condition (i.e. agent monotonicity) can be equivalently re-expressed as a certain 
``budget competition'' property: if one player increases his budget, then the expected utility for the other player cannot increase by more than the marginal gain the total welfare.  In other words, for all $a + b \geq 1$,  
a strategyproof mechanism must satisfy 
$w^A(a,b) -  w^A(a,b-1) \leq  \Delta^{\oplus B}(a,b)$ where $\Delta^{\oplus B}(a,b) = w(a,b) - w(a,b-1)$ and a similar consequence with regard to 
$\Delta^{\oplus A}(a,b)$. 

\begin{claim}[Equivalence of monotonicity and budget competition for two players]
\begin{itemize}
\item[1:]
$w^A(a,b) - w^A(a,b-1) \leq w(a,b) - w(a,b-1)$ iff $w^B(a,b-1) \leq w^B(a,b)$ \\
\item[2:]
$w^B(a,b) - w^B(a,b-1) \leq w(a,b) - w(a,b-1)$ iff $w^A(a,b-1) \leq w^A(a,b)$ 
\end{itemize}
\end{claim}

\begin{proof}
We provide the proof for $\Delta^{\oplus B}(a,b)$. \\
$w^B(a,b-1) \leq w^B(a,b)$ iff 
$w^A(a,b) + w^B(a,b-1) \leq w^A(a,b) + w^B(a,b)$ iff \\ 
$w^A(a,b) + w^B(a,b-1) + w^A(a,b-1) \leq w^A(a,b) + w^B(a,b) + w^A(a,b-1)$ iff \\
$w^A(a,b) + w(a,b-1) \leq w(a,b) + w^A(a,b-1)$ iff
$w^A(a,b) - w^A(a,b-1) \leq w(a,b) - w(a,b-1)$ 
\qed

\end{proof}

A direct extension of our approach to three players would involve proving inductively that an allocation rule that satisfies these conditions for all budgets that total at most $t$ can always be extended to handle budgets that total up to $t+1$.  We now give an example to show that this is not the case, even when our underlying submodular function takes a very simple linear form.

Suppose we have three players $A$, $B$, and $C$, and suppose our ground set $U$ contains a single element $g$ of value; all other elements are worth nothing.  The utility for each agent is $1$ if their allocation contains $g$, otherwise their utility is $0$.  In this case, the locally greedy algorithm simply gives element $g$ to the first player that is chosen for allocation; the remaining allocations have no effect on the utility of any player.  Note then that the marginal gain in social welfare is $1$ for the first allocation, and $0$ for all subsequent allocations made by the greedy algorithm.

We now define the behaviour of a mechanism for all budget declarations totalling at most $2$.  Note that the relevant feature of this mechanism is the (possibly randomized) choice of which agent is first in the order presented to the greedy algorithm.  We present this behaviour in the following table.

\begin{center}
\begin{tabular}{c|c|c}
Budgets $(a,b,c)$ & Player selected & Utilities $(w^A, w^B, w^C)$ \\
\hline
$(0,0,0)$ & N/A & $(0,0,0)$ \\
\hline
$(1,0,0)$ & $A$ & $(1,0,0)$ \\
$(0,1,0)$ & $B$ & $(0,1,0)$ \\
$(0,0,1)$ & $C$ & $(0,0,1)$ \\
\hline
$(1,1,0)$ & $A$ & $(1,0,0)$ \\
$(0,1,1)$ & $B$ & $(0,1,0)$ \\
$(1,0,1)$ & $C$ & $(0,0,1)$ \\
\end{tabular}
\end{center}

We note that this mechanism (restricted to these type profiles) is strategyproof, satisfies the budget competition property, and also satisfies the cross-monotonicity properties (i.e.\ in the invariants of Theorem \ref{lem:monotone2}).  However, we claim that no allocation on input $(1,1,1)$ that obtains positive social welfare can maintain the budget competition property.  To see this, note that the budget competition property would imply that $w^A(1,1,1) \leq w^A(1,0,1) + \Delta^{\oplus B}(1,1,1) = w^A(1,0,1) = 0$.  Similarly, we must have $w^B(1,1,1) = w^C(1,1,1) = 0$.  Thus, in order to maintain the budget competition property, our mechanism would have to generate social welfare $0$ on input $(1,1,1)$, resulting in an unbounded approximation factor.  We conclude that there is no way to extend this specific mechanism for budgets totalling at most $2$ to a (strategyproof) mechanism for budgets totalling at most $3$ while maintaining the constant approximation factor of the locally greedy algorithm.

Roughly speaking, the problem illustrated by this example is that the presence of more than two bidders means that a substantial increase in the utility gained by one player does not necessarily imply any limits in the utility of another specific player.  This is in contrast to the case of two players, in which the utilities of the two players are more directly related.  This fundamental difference seems to indicate that substantially different techniques will be required in order to construct strategyproof mechanisms with three or more players. 

A different (and natural) approach would be to employ the solution for two players by grouping all but one player at a time, and running the mechanism for two players recursively. However, this method seems ineffective in our setting, as interdependencies between the players' outcomes can introduce non-monotonicities.
This brings into question whether or not the locally greedy method can be
made strategyproof by some method of randomizing over the order in which allocations are made.

This ``2 vs 3 barrier'' is, of course, not unique to our problem. Many
optimization problems (such as graph coloring) are easily
solvable when the size parameter
is $k = 2$ but
become NP-hard when $k \geq 3$. Closer to our setting, the 2 vs 3 barrier 
has been discussed in recent papers concerning mechanism design without
payments, such as in the Lu et al. \cite{LSWZ10} results
for $k$-facility location. Additionally Ashlagi et al. discussed similar issues (\cite{AFKP10}) in the context of mechanisms for kidney exchange. They show that for $n$ points on the line, 
there is a deterministic (respectively, randomized) strategyproof mechanism for placing $k = 2$ facilities (so as to minimize
the sum of distances to the nearest facility) with approximation ratio
$n-2$ (respectively, 4) whereas for $k = 3$ facilities, they do
not know if there is any bounded ratio for deterministic 
strategyproof mechanisms and the 
best known approximation for randomized strategyproof mechanisms
is $O(n)$. 


\end{document}